\definecolor{colorhkust}{RGB}{20,43,140}
\definecolor{colortsinghua}{RGB}{116,52,129}
\definecolor{color1}{RGB}{128,0,0}
\newtheorem{lemma}{Lemma}
\newtheorem{theorem}{Theorem}
\newtheorem{definition}{Definition}
\newtheorem{remark}{Remark}
\newcommand{\mini}{\operatorname{minimize}}
\newcommand{\subj}{\operatorname{subject~to}}
\newcommand{\diagg}{\rm{diag}}
\newcommand{\supp}{{\rm{supp}}}
\newcommand{\R}{\mathbb{R}}  
\newcommand{\C}{\mathbb{C}} 
\newcommand{\E}{\mathbb{E}}
\DeclareMathOperator*{\argmin}{argmin}
\begin{document}

\title{Algorithm Unrolling for Massive Access via Deep Neural Network with Theoretical Guarantee}

%

\author{\IEEEauthorblockN{Yandong Shi,~\IEEEmembership{Student Member,~IEEE}, Hayoung Choi,~\IEEEmembership{Member,~IEEE},\\ Yuanming Shi,~\IEEEmembership{Senior Member,~IEEE}, and Yong
Zhou,~\IEEEmembership{Member,~IEEE}}
        \thanks{This paper was presented in part at IEEE International Conference on Communications Workshops (ICC Workshops), Dublin, Ireland, 2020 \cite{yandong2020masssive}.
        Y. Shi is with the School of Information Science and Technology, ShanghaiTech University, Shanghai 201210, China, also with the Shanghai Institute of Microsystem and Information Technology, Chinese Academy of Sciences, Shanghai 200050, China, and also with the University of Chinese Academy of Sciences, Beijing 100049, China (e-mail: shiyd@shanghaitech.edu.cn).
        H. Choi is with the Department of Mathematics
        Kyungpook National University Daegu, Republic of Korea (e-mail:hayoung.choi@knu.ac.kr).
   Y. Shi  and Y. Zhou are with the School of Information Science and Technology, ShanghaiTech University, Shanghai 201210, China (e-mail:\{shiym, zhouyong\}@shanghaitech.edu.cn). Yuanming Shi is also with Yoke Intelligence, Shanghai, China.}
}

\maketitle

\vspace{-2cm}

\begin{abstract}
Massive access is a critical design challenge of Internet of Things (IoT) networks. 
In this paper, we consider the grant-free uplink transmission of an IoT network with a multiple-antenna base station (BS) and a large number of single-antenna IoT devices. 
Taking into account the sporadic nature of IoT devices,
we formulate the joint activity detection and channel estimation (JADCE) problem as a group-sparse matrix estimation problem.
This problem can be solved by applying the existing compressed sensing techniques, which however either suffer from high computational complexities or lack of algorithm robustness.
To this end, we propose a novel algorithm unrolling framework based on the deep neural network to simultaneously achieve low computational complexity and high robustness  for solving the JADCE problem.
Specifically, we map the original iterative shrinkage thresholding algorithm (ISTA) into an unrolled recurrent neural network (RNN), thereby improving the convergence rate and computational efficiency through end-to-end training.
Moreover, the proposed algorithm unrolling approach inherits the structure and domain knowledge of the ISTA, thereby maintaining the algorithm robustness, which can handle non-Gaussian preamble sequence matrix in massive access.
With rigorous theoretical analysis, we further simplify the unrolled network structure by reducing the redundant training parameters.
Furthermore, we prove that the simplified unrolled deep neural network structures enjoy a linear convergence rate. 
Extensive simulations based on various preamble signatures show that the proposed unrolled networks outperform the existing methods in terms of the convergence rate, robustness and estimation accuracy. 
\end{abstract}

\begin{IEEEkeywords}
Massive access, joint activity detection and
channel estimation, group-sparse matrix estimation, algorithm unrolling, and deep learning.
\end{IEEEkeywords}

\section{Introduction}\label{sec:intro}


Massive machine-type communications (mMTC), as one of the three major use cases of the fifth-generation (5G) wireless networks \cite{sharma2019towards}, is expected to support ubiquitous connectivity for a large number of low-cost Internet of Things (IoT) devices. 
The typical applications of the mMTC use case include smart homes, wearables, environmental sensing, and healthcare \cite{peng2019smartcity}. 
Driven by the increasing popularity of IoT services and the decreasing cost of IoT devices, the number of IoT devices is expected to reach $ 75.4$ billion by 2025 \cite{chen2020massive}. 
Providing efficient medium access for such a large number of IoT devices is a critical design challenge that needs to be addressed to support mMTC. 




Grant-free random access, proposed by the 3rd generation partnership project (3GPP) for 5G new radio (NR), has been recognized as a promising multiple access scheme that is capable of accommodating massive IoT devices with low-latency requirements \cite{liu2018sparse}. 
In particular, with grant-free random access, each IoT device can transmit a frame containing both preamble sequence and data at any time instant without waiting for the grant from the base station (BS). 
Compared to the grant-based random access, the signaling overhead and the random access latency can be significantly reduced. 
However, due to the lack of prior knowledge on the subset of active IoT devices that will transmit and the necessity of obtaining accurate channel state information (CSI) for data decoding, it is of vital importance to achieve joint activity detection and channel estimation (JADCE) for grant-free random access in IoT networks\cite{chen2018sparse}. 
To alleviate the collision probability of preamble sequences and in turn facilitate JADCE, the authors in \cite{jiang2019multiple, ding2019comparison, shao2019unified} focused on the design of non-orthogonal preamble signature sequences. 
In particular, the authors in \cite{jiang2019multiple} proposed a novel preamble design that concatenates multiple Zadoff-Chu (ZC) sequences, which increase the success rate of preamble transmission and enhance the channel reuse efficiency for grant-free random access.

Taking into account the sporadic nature of IoT devices, the JADCE problem was formulated as a sparse signal recovery problem in \cite{senel2018grant}. 
This kind of problem can be addressed by adopting the compressed sensing (CS) technique that is capable of exploiting the channel sparsity structures in both the time- and frequency-domains \cite{qin2018sparse, gao2018compressive}. 
Due to the large-scale nature of mMTC, the computationally efficient approximate message passing (AMP) based algorithms were proposed to solve the CS problems for supporting massive access \cite{liu2018massive, chen2018sparse, ke2020compressive}.
The authors in \cite{chen2018sparse} showed that AMP with a vector denoiser and multiple measurement vectors (MMV) \cite{ziniel2012efficient} achieve
approximately the same performance for JADCE.
An AMP algorithm with a minimum mean-squared error (MMSE) denoiser was proposed in \cite{liu2018massive}, where a prior knowledge of the underlying channel distribution was required. 
Furthermore, a generalized multiple measurement vector approximate message passing (GMMV-AMP) algorithm was proposed to detect the device activity by exploiting sparsity in both the spatial and the angular domains \cite{ke2020compressive}.
However, AMP-based algorithms often fail to converge when the preamble signature matrix is mildly ill-conditioned or non-Gaussian \cite{fletcher2018plug}.
To this end, the authors in \cite{jiang2018joint, he2018compressive} introduced a mixed $\ell_1 / \ell_2$-norm to formulate the JADCE problem as a form of group LASSO, which can be solved by using the interior-point method \cite{jiang2018joint} or the alternating direction method of multipliers (ADMM) \cite{he2018compressive} without any prior knowledge on CSI.
In particular, the popular iterative shrinkage thresholding
algorithm (ISTA) \cite{qin2013efficient, bonnefoy2015dynamic} was one of the best known robust algorithmic approaches to solve the group LASSO problem.
However, the ISTA converges only sublinearly in general cases \cite{qin2013efficient} and suffers from an inherent trade-off between the estimation performance and convergence rate \cite{gir2018tradeoff}, which prohibit its practical implementation for grant-free massive access in IoT networks.
Deep learning has recently been emerging as a disruptive
technology to reduce the computational cost.
For instance, the authors in \cite{sun2018learning, ei2019learning, shen2019gnn, shen2020lorm, liang2020learning} proposed the idea of ``learning to optimize", where neural networks were developed to replace the traditional optimization algorithms to solve the non-convex constrained optimization problems \cite{sun2018learning, ei2019learning, shen2019gnn} and mixed-integer nonlinear programming (MINLP) problems \cite{shen2020lorm, liang2020learning} for resource allocation in wireless networks. 
These studies demonstrated the potentials of deep learning for improving communication and computation efficiency.
However, the deep learning framework usually regarded the neural network as a black box and cannot guarantee the theoretical performance. 
Therefore, the lack of interpretability and theoretical guarantee of the deep learning framework is a critical issue that needs to be addressed in wireless networks. 

Fortunately, the unrolled deep neural network \cite{gregor2010learning}, as another powerful deep learning technique, has been proposed to provide a concrete connection between the existing iterative algorithms and the deep neural networks with theoretical performance guarantee \cite{monga2019algorithm, shlezinger2020model}.
The promising applications of this method include resource management, MIMO detection and precoding design \cite{bala2019unfolding, Hu2020unrolling}.
The typical unrolled deep neural network, i.e., \textit{learned iterative shrinkage thresholding
algorithm} (LISTA) \cite{gregor2010learning, liu2018alista, ablin2019learning}, was adopted to approximate the sparse coding based on LASSO, which achieves a better estimation performance.
Moreover, the authors in \cite{chen2018theoretical, liu2018alista} established the simplified structures for LISTA and proved that simplified LISTA achieves linear convergence rate.
The authors in \cite{ablin2019learning} study the selection of adapted step size to improve the convergence rate of ISTA and further propose a unrolled network where only the step sizes of ISTA are learned.
However, such an unrolling framework cannot be directly applied to solve the JADCE problem in IoT networks with grant-free random access, where the uplink channels of active devices are usually represented by a group-row-sparse matrix. 
This is because the existing LISTA with a scalar shrinkage-thresholding operator (SSTO), which only focuses on the recovery of sparse vectors, cannot tackle the group-sparse matrix estimation problem.

\subsection{Contributions}
In this paper, we consider grant-free massive access in
an IoT network, where a large number of single-antenna
IoT devices sporadically transmit non-orthogonal preamble
sequences to a multi-antenna BS. In such a network setting,
we formulate the JADCE problem as a group-sparse matrix
estimation problem by introducing a mixed $\ell_1 / \ell_2$-norm.
Based on the above discussions, we shall develop a new unrolled deep neural network framework by adopting the multidimensional generalization operator, named multidimensional shrinkage thresholding operator (MSTO) \cite{puig2011multidimensional}, to address this problem.
Such an extension turns out to be non-trivial, as the MSTO for the group-row-sparse matrix breaks the independency between the sparse vectors, which brings formidable challenge to reveal the weight coupling structure and analyze the convergence rate for the proposed method.
We address this challenge and provide the theoretical performance guarantee for the proposed framework that is capable of solving JADCE problem to support massive access in IoT networks.

The major contributions of this paper are summarized as follows:
\begin{itemize}
        \item We develop a novel unrolled deep neural network framework with three structures for group-sparse matrix estimation to support grant-free massive access in IoT networks. 
        The proposed methods enjoy high robustness by inheriting the structure of ISTA and keep the computational complexity at a low level through end-to-end training.
        \item We conduct rigorous theoretical analysis to get rid of the interpretability issue and facilitate the efficient training in IoT networks. 
        We reveal the weight coupling structure between the training parameters and identify the property that the learned weight matrix only depends on the preamble signature.
        The theoretical analysis allows us to further simplify original unrolled network structure by reducing the redundant training parameters.
Furthermore, we prove that the simplified unrolling neural networks enjoy a linear convergence rate. 
        \item Extensive simulations are conducted to validate the theoretical analysis and show the superior performance of the proposed algorithms in three critical aspects for grant-free massive access. 
        Firstly, comparing with the robust algorithms such as ISTA, the proposed methods have much lower computational cost.
        Secondly, the proposed methods are more robust when comparing to computational-efficient algorithms such as AMP-based algorithms by considering simulating different preamble signatures, including Gaussian matrix, binary matrix, and ZC sequence.
        Thirdly, the proposed algorithms are capable of returning more accurate solutions comparing with the classical CS-based algorithms. 
\end{itemize}

\subsection{Organization and Notations}
The rest of this paper is organized as follows.
The system model and problem formulation are described in Section \ref{sec:sys}. 
Section \ref{sec:alg} presents the proposed three kinds of unrolled neural networks for solving the group-sparse matrix estimation problem.
The convergence analysis of the proposed unrolled neuron networks is presented in Section \ref{sec:con}.
The numerical results of the proposed algorithms are illustrated in Section \ref{sec:num}.
Finally, Section \ref{sec:conclu} concludes this paper. 

\subsubsection*{Notations}
Throughout this paper, we denote $[N]=\{1,2,\ldots,N \}$. 
Let $\R$ (resp. $\C$) be the set of real (resp. complex) numbers and $\mathbb{N}$ be the set of integers.
For a set $\mathcal{S}$, the number of elements of $\mathcal{S}$ is denoted by $|\mathcal{S} |$.
For index sets $\mathcal{I} = \{ i_1, \ldots, i_n \} \subset [N]$, $\mathcal{J} = \{ j_1,\ldots, j_m \} \subset [M]$ and $\bm{A} \in \R^{N \times M}$, we denote by  $\bm{A}[\mathcal{I}, \mathcal{J} ]$ the (sub)matrix of entries that lie in the rows of $\bm{A}$ indexed by $\mathcal{I}$ and the columns indexed by $\mathcal{J}$.
When $\mathcal{I} = [N]$ (resp. $\mathcal{J} = [M]$), we simply denote by $\bm{A}[:, \mathcal{J} ]$ (resp. $\bm{A}[\mathcal{I}, : ]$).
For a matrix $\bm{A}$,
$\|\bm{A}\|_2$ (resp. $\|\bm A\|_F$) denotes the operator (resp. Frobenius) norm.
And $\|\bm{A}\|_{\max}$ denotes the max norm.
Also,  $\|\bm{A} \|_{2,1}$ (resp. $\|\bm{A} \|_{2,0}$) denotes mixed $ \ell_1/\ell_2$-norm (resp. mixed $ \ell_0/\ell_2$-norm).
For a vector $\bm{v}$, $\| \bm{v} \|_2$ denotes the Euclidean norm.
We denote the support of the vector $ \bm{v}=[v_1,\ldots,v_N] \in \R^N$ as $\text{supp}(\bm{v})$, i.e.  
$\text{supp}(\bm{v}) = \{ i\in [N] | v_i \neq 0 \}$.



\section{System Model and Problem Formulation}\label{sec:sys}

\subsection{System Model}
Consider the grant-free uplink transmission of a single-cell IoT network consisting of one $M$-antenna BS and $N$ single-antenna IoT devices, as shown in Fig. \ref{fig:model}. 
Full frequency reuse and quasi-static block fading channels are considered in this paper. 
We denote $[N] = \{1, \ldots, N \}$ as the set of IoT devices, which sporadically communicate with the BS. 
In each transmission block, all the IoT devices independently decide whether or not to be active. 
Within a specific transmission block, we denote $a_n = 1$ if device $n$ is active and $a_n = 0$ otherwise. 
Because of the sporadic transmission, it is reasonable to assume that each IoT device is active with a probability being much smaller than 1. 
In addition, we assume that the active IoT devices are synchronized at the symbol level \cite{jiang2018joint}.
The preamble signal received at the BS, denoted as $\bm{y}(\ell) \in \mathbb{C}^M$, can be expressed as 
\begin{eqnarray}\label{rs}
\bm y(\ell) = \sum_{n=1}^{N} \bm{h}_n a_n s_n(\ell) + \bm{z}(\ell) ,\quad \ell = 1,...,L, 
\end{eqnarray}
where $\bm{h}_n \in {\C}^{M}$ denotes the channel vector between device $n$ and the BS, $s_n(\ell) \in \mathbb{C}$ denotes the $\ell$-th symbol of the preamble signature sequence transmitted by device $n$, $L$ denotes the length of the signature sequence, and $\bm{z}(\ell) \in \mathbb{C}^M$ denotes the additive white Gaussian noise (AWGN) vector at the BS. 

\begin{figure}[t]
        \centering
        \begin{minipage}{.46\textwidth}
                \centering
                \includegraphics[width=1.0\columnwidth]{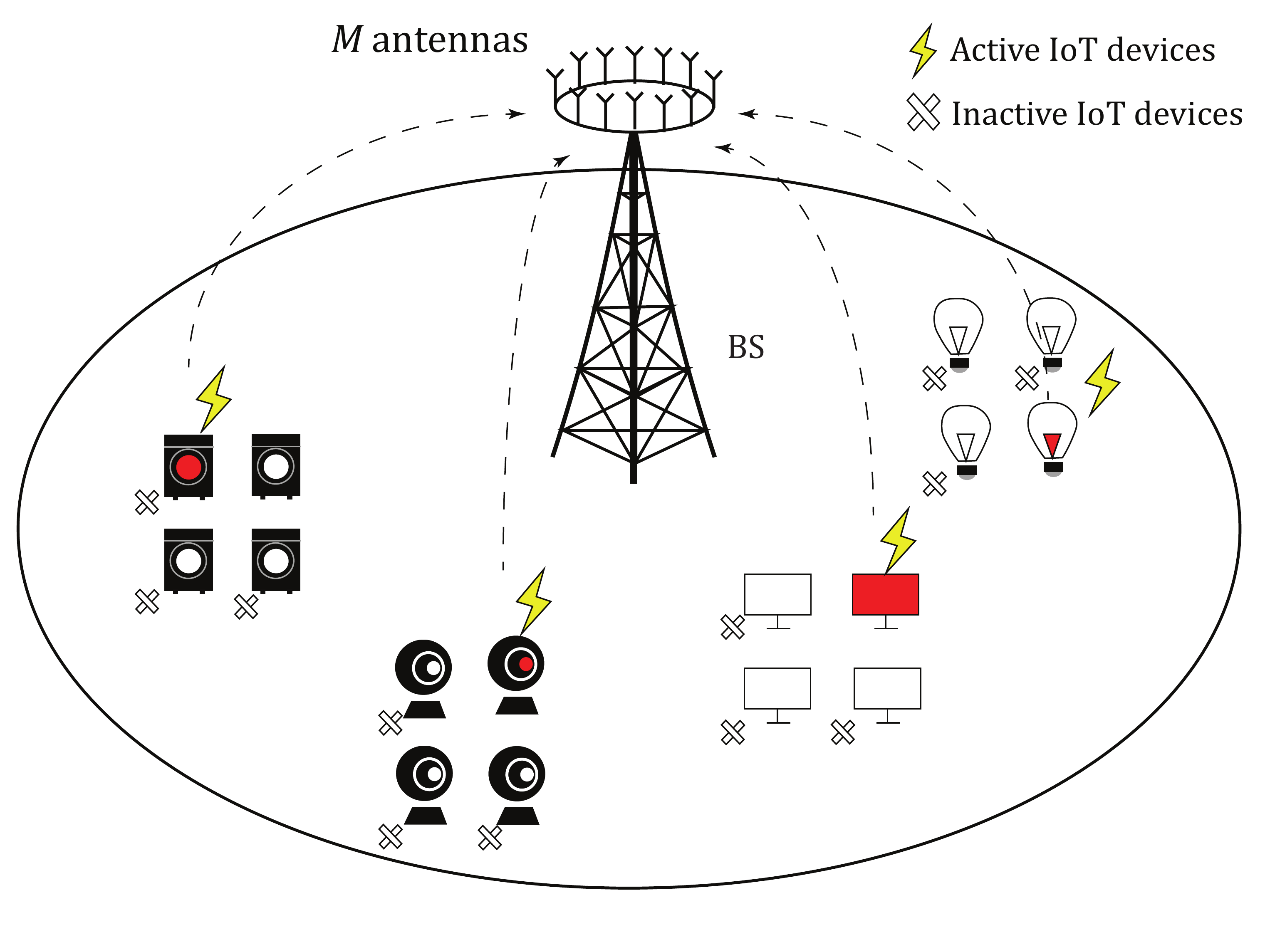}
                \caption{Illustration of an IoT network with a massive number of IoT devices that transmit sporadically to the BS.}\label{fig:model}
        \end{minipage}
        \hspace{4mm}
        \begin{minipage}{.46\textwidth}
                \centering
                \includegraphics[width=1.0\columnwidth]{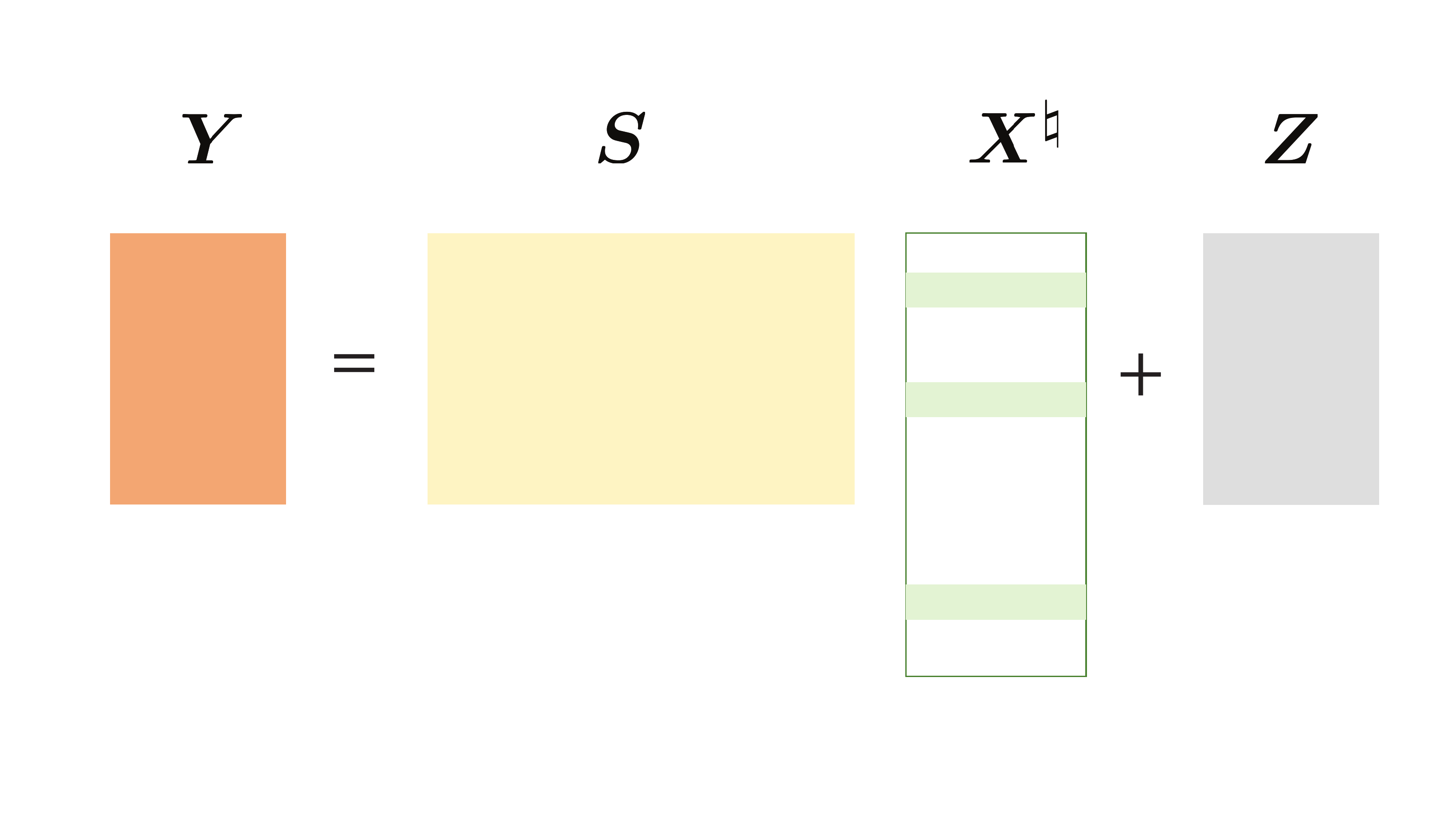}
                \caption{Illustration of the JADCE problem, where $\bm{X}^{\natural}$ has the group-sparse structure in rows, i.e., if one entry of $\bm{X}^{\natural}$ is zero, then other entries on the same row should  be zero simultaneously. }\label{fig:gp}
        \end{minipage}
\end{figure}
As the length of the signature sequence is typically much smaller than the number of IoT devices, i.e., $L \ll N$, it is impossible to allocate mutually orthogonal signature sequences to all the IoT devices. 
Hence, each IoT device is assigned a unique signature sequence, which is generally non-orthogonal to the preamble sequences assigned to other IoT devices.  
Note that three different kinds of non-orthogonal preamble sequences will be considered in the simulations.

For notational ease, we denote $\bm{Y} = [\bm{y}(1),...,\bm{y}(L)]^T \in \C^{L \times M}$ as the received preamble signal matrix across $M$ antennas over the transmission duration of $L$ symbols, $\bm{H} = [\bm{h}_1,...,\bm{h}_N]^T \in \C^{N \times M} $ as the channel matrix between $N$ devices and the BS, $\bm{Z} = [\bm{z}(1),...,\bm{z}(L)] \in \C^{L \times M}$ as the noise matrix at the BS, and $\bm S= [\bm{s}(1),...,\bm{s}(L)]^T \in \C^{L \times N} $ as the known preamble signature sequence matrix with $\bm s(\ell) = [s_1(\ell),...,s_N(\ell)]^T \in \C^N$. As a result, the preamble signal received at the BS can be rewritten as 
\begin{eqnarray}\label{ms}
\bm{Y} = \bm{SAH} + \bm{Z},
\end{eqnarray}
where $\bm{A} = {\diagg}(a_1,...,a_N) \in \R^{N \times N}$ denotes the diagonal activity matrix. 
We aim to simultaneously detect the activity matrix $\bm A$ and estimate the channel matrix $\bm H$, which is known as a JADCE problem \cite{liu2018massive}.
By denoting $ \bm{X}^{\natural} = \bm{AH} \in \C^{N \times M}$, matrix $\bm{X}^{\natural}$ thus has the group-sparse structure in rows \cite{jiang2018joint}, as shown in Fig. \ref{fig:gp}.
We further rewrite (\ref{ms}) as 
\begin{eqnarray}\label{mmv}
\bm{Y} = \bm{SX}^{\natural} + \bm{Z}.
\end{eqnarray} 
Note that the active devices and their corresponding channel vectors can be recovered from the estimation of matrix $\bm{X}^{\natural}$ \cite{jiang2018joint}.

\subsection{Problem Formulation}
To estimate the group-row sparse matrix $\bm{X}^{\natural}$, we adopt the mixed $ \ell_1/\ell_2$-norm \cite{jiang2018joint} to induce the group sparsity, i.e., 
$
\mathcal{R}(\bm{X}) = \sum_{n=1}^{N} \Vert \bm{X}[n,:] \Vert_2, 
$
where $\bm{X}[n,:]$ is the $n$-th row of matrix $\bm{X}$.
The group-sparse matrix estimation
problem can be reformulated as the following unconstrained convex optimization problem (also known as group LASSO) \cite{yuan2006model}
\begin{eqnarray}\label{probi}
\mathscr{P}:  \mathop{\mini}\limits_{\bm{X} \in \C^{N\times M}} \frac{1}{2} \Vert \bm{Y} - \bm{S}\bm{X} \Vert_{F}^2 + \lambda\mathcal{R}(\bm{X}), 
\end{eqnarray}
where $\lambda > 0$ denotes the regulation parameter.
As matrix $\bm{X}$ is group sparse in rows, problem $\mathscr{P}$ is essentially
a group-sparse matrix estimation problem.
%

To facilitate efficient algorithm design, we rewrite (\ref{mmv}) as its real-valued counterpart
\begin{eqnarray}\label{realmmv}
        \begin{aligned}
        \bm{\tilde{Y}}  &= \bm{\tilde{S}} \bm{\tilde{X}}^{\natural} + \bm{\tilde{Z}} \\
        &= \left[ \begin{matrix}
        \Re\{\bm{S}\} & -\Im\{\bm{S}\} \\
        \Im\{\bm{S}\} & \Re\{\bm{S}\}
        \end{matrix} \right] \left[ \begin{matrix}
        \Re\{ \bm{X}^{\natural} \} \\
        \Im \{\bm{X}^{\natural}\}
        \end{matrix} \right] +   \left[ \begin{matrix}
        \Re \{ \bm{Z} \} \\
        \Im \{ \bm{Z} \}
        \end{matrix} \right],
        \end{aligned}
\end{eqnarray}
where $\Re\{ \cdot \}$ and $\Im\{ \cdot \}$ represent the real part and imaginary part of a complex matrix.
As a result, problem $\mathscr{P}$ can be further rewritten as
\begin{eqnarray}\label{groupLASSO}
        \mathscr{P}_r: \mathop{\mini}\limits_{\bm{\tilde{X}} \in \R^{2N\times M}} \frac{1}{2} \Vert \bm{\tilde{Y}} - \bm{\tilde{S}}\bm{\tilde{X}} \Vert_{F}^2 + \lambda\mathcal{R}(\bm{\tilde{X}}).
\end{eqnarray}

Therefore, our goal of JADCE becomes the recovering of matrix $\bm{\tilde{X}}^{\natural} $ based on the preamble matrix $ \bm{\tilde{S}}$ and the noisy observation $ \bm{\tilde{Y}}$.
\newtheorem{assumption}{Assumption}\label{ass1}

\subsection{Prior Work}
The ISTA \cite{yuan2006model} is a popular approach to solve the group LASSO problem.
In particular, the resulting ISTA iteration for \eqref{groupLASSO} can be written as
\begin{eqnarray}\label{ISTA}
\bm{\tilde{X}}^{k+1} = \eta_{\lambda / C} \left(\bm{\tilde{X}}^k + \frac{1}{C}\bm{\tilde{S}}^T(\bm{\tilde{Y}} - \bm{\tilde{S}}\bm{\tilde{X}}^k) \right),
\end{eqnarray}
where $\bm{\tilde{X}}^k $ is an estimate of ground truth $\bm{\tilde{X}}^{\natural}$ at iteration $k$, $\frac{1}{C}$ plays a role as step size, $\lambda$ is regulation parameter, and $\eta_{\lambda / C}(\cdot)$ denotes the MSTO \cite{puig2011multidimensional}.
Specifically, $\eta_{\theta}(\bm{X})[n] $ denotes as the $n$-th row of the matrix $\bm{X}$ after applying $\eta_{\theta}(\cdot)$, which is defined as \cite{bonnefoy2015dynamic}
\begin{eqnarray}
\eta_{\theta}(\bm{X})[n] = \max \left\{0, \frac{\Vert \bm{X}[n,:] \Vert_2 - \theta}{\Vert \bm{X}[n,:] \Vert_2 } \right\} \bm{X}[n,:].
\end{eqnarray}

However, ISTA suffers from an inherent trade-off between estimation performance and convergence rate based on the choice of $\lambda $, i.e., a larger $\lambda $ leads to faster convergence but a poorer estimation performance \cite{gir2018tradeoff}.
Moreover, the choice of step size $ \frac{1}{C}$ also influences the convergence rate \cite{ablin2019learning}.
In the next section, we shall propose a learned ISTA framework by learning the parameters including $\lambda $ and the step size $ \frac{1}{C}$ simultaneously to improve the convergence rate and estimation performance of ISTA for the group-sparse matrix estimation problem (ISTA-GS).

\section{Algorithm Unrolling via Deep Neural Network}\label{sec:alg}
In this section, we propose an unrolled deep learning framework to solve the JADCE problem by exploiting the group sparse structure.
Although the LISTA proposed in \cite{gregor2010learning, chen2018theoretical, liu2018alista, ito2019trainable} can recover the individual sparse vector signals, all these methods cannot recover the matrix with group row sparsity.
To address this issue, we extend LISTA for the group-sparse matrix recovery.
Furthermore, we analytically prove that the weight coupling property holds for the group-sparse matrix estimation problem.
\begin{figure}
        \centering
        \subfigure[Unrolled network strucrure 1 (LISTA-GS).]{
                \includegraphics[width=1\linewidth]{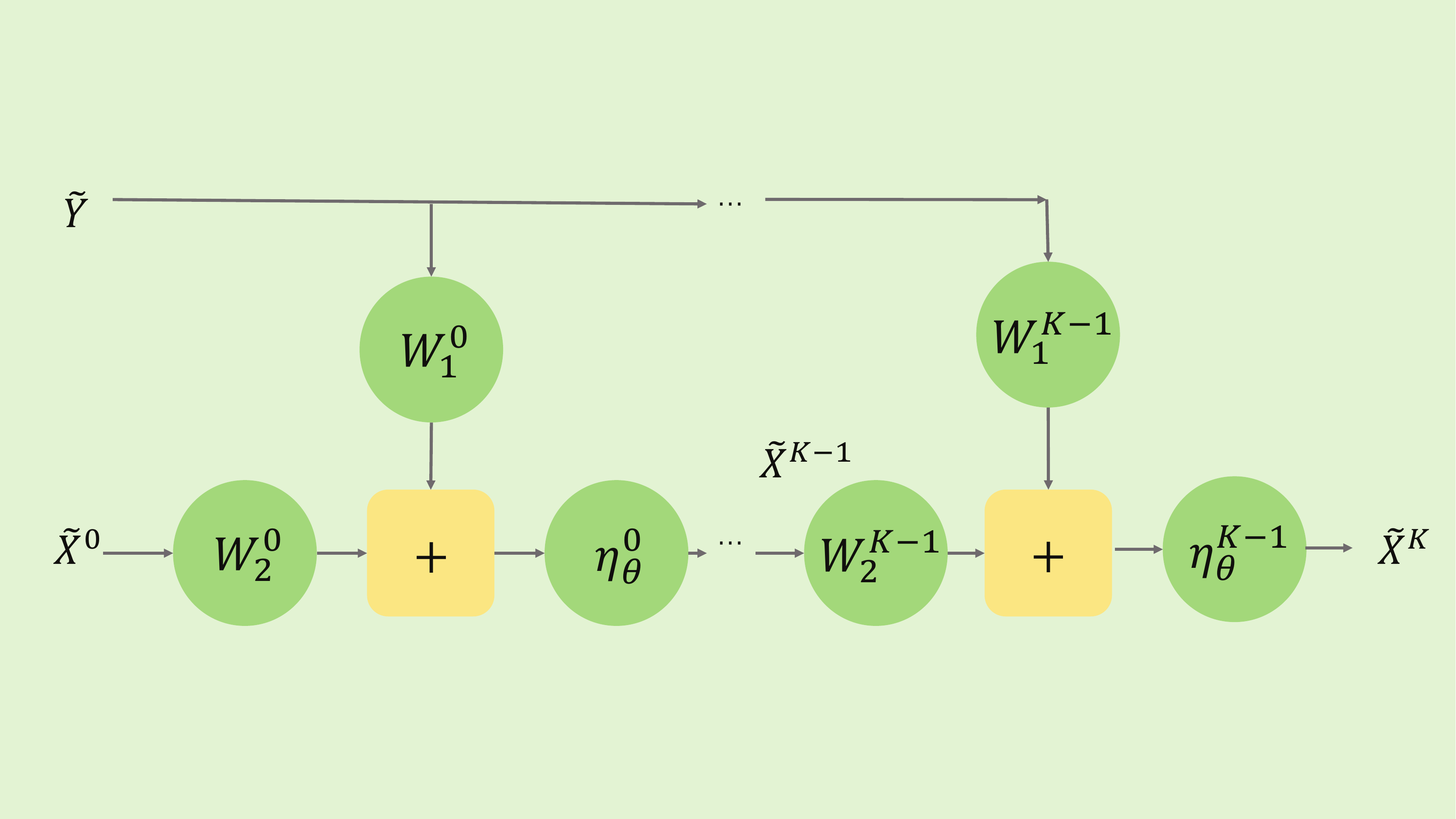}
                        \label{fig:a}
        }
        \subfigure[Unrolled network structure 2 (LISTA-GSCP).]{
                \includegraphics[width=1\linewidth]{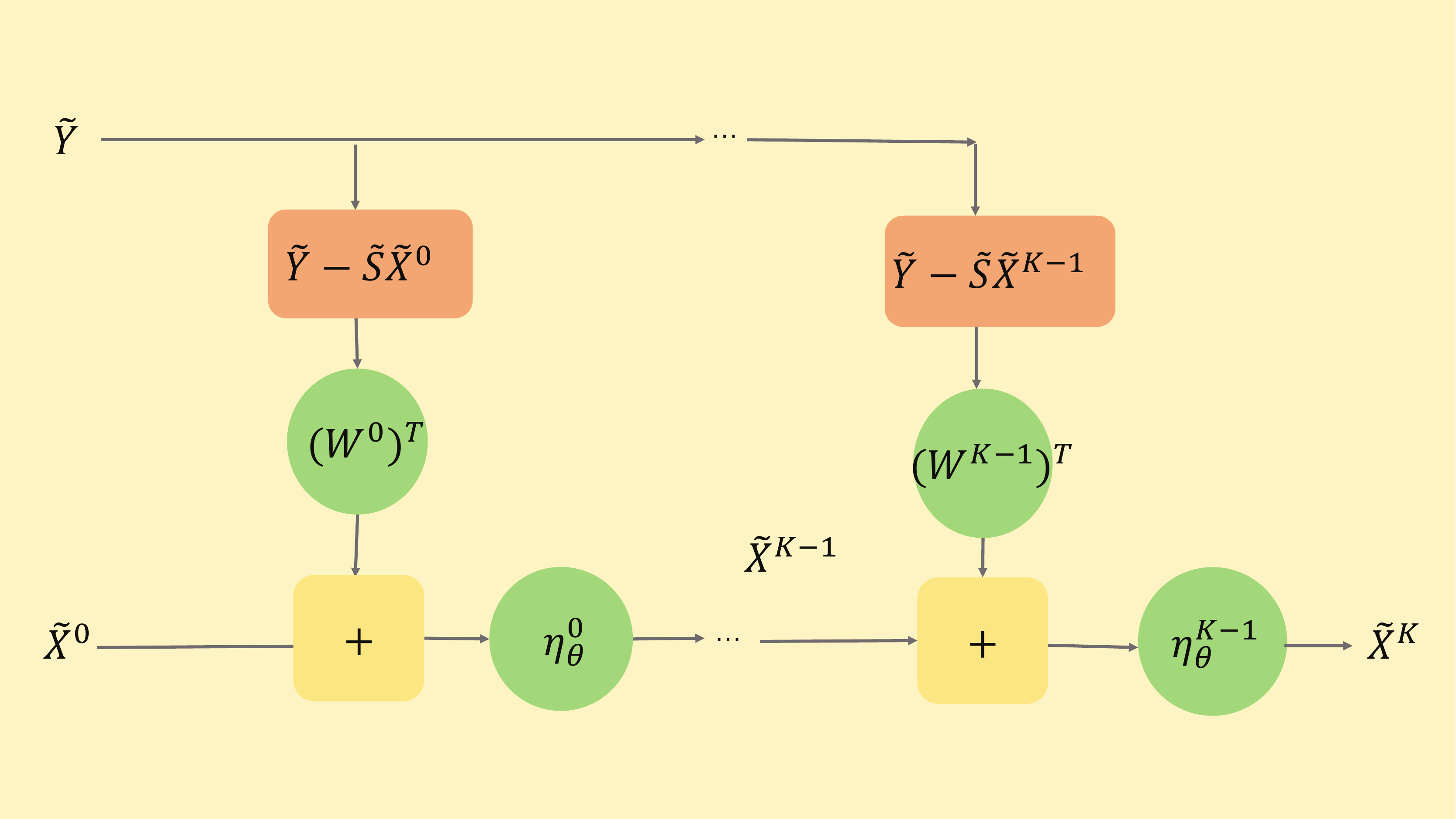}
                        \label{fig:b}
        }
        \subfigure[Unrolled network structure 3 (ALISTA-GS).]{
                        \includegraphics[width=1\linewidth]{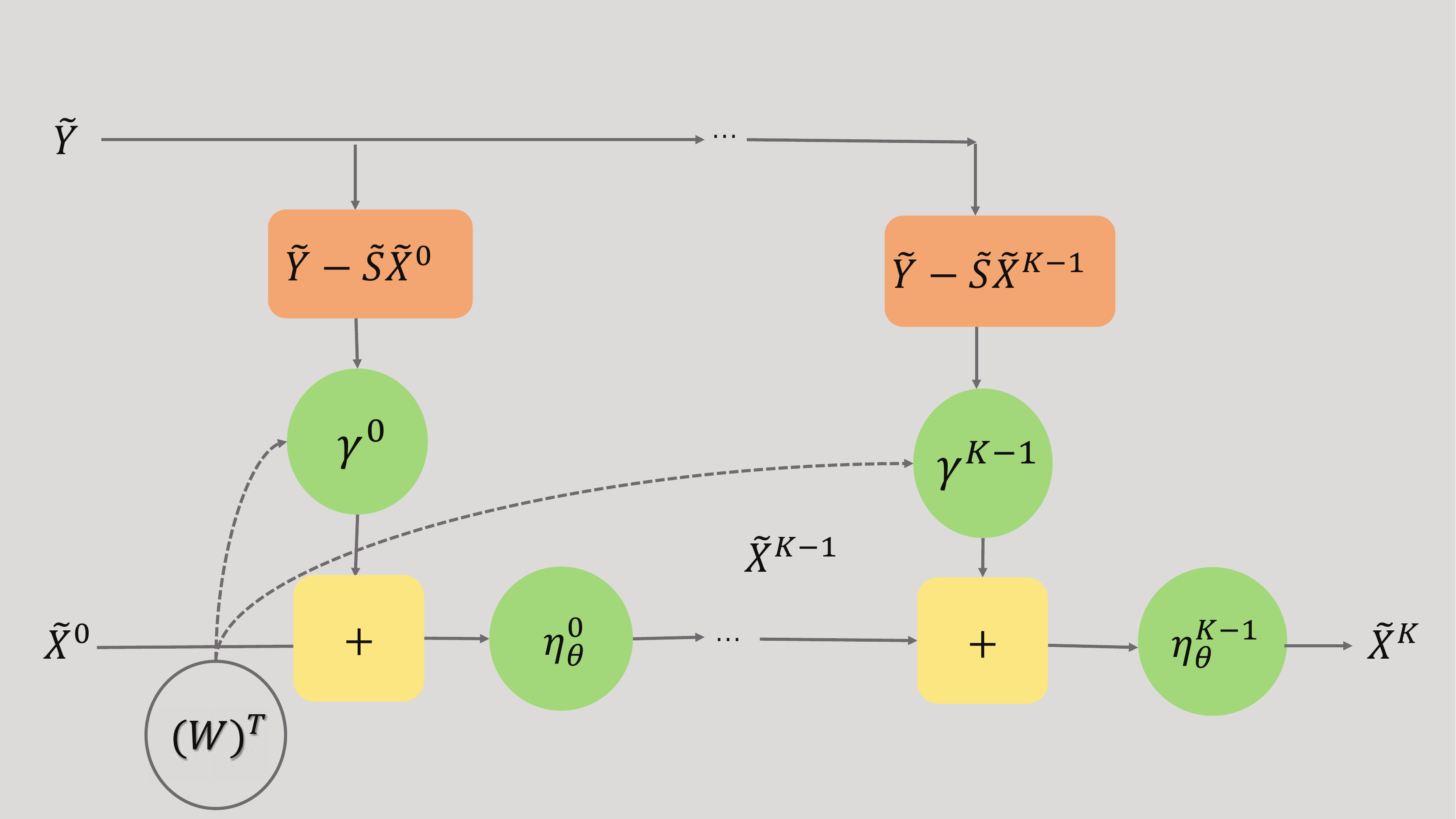}
                        \label{fig:c}
        }
        \caption{Three unrolled network structures. }
\end{figure}

\subsection{Unrolled Neural Network Structures}
In this subsection, we propose three neural network structures under the unrolled framework for group-sparse matrix estimation problem $\mathscr{P}_r$.
\subsubsection{LISTA-GS}
Inspired by \cite{gregor2010learning, chen2018theoretical} and by denoting $ \bm{W}_1 = \frac{1}{C}\bm{\tilde{S}}^T$, $\bm{W}_2 = \bm{I} -  \frac{1}{C}\bm{\tilde{S}}^T \bm{\tilde{S}}$, and $\theta = \frac{\lambda}{C}$,
we rewrite (\ref{ISTA}) as
\begin{eqnarray}\label{eq:ista}
\bm{\tilde{X}}^{k+1} = \eta_{\theta^k}(\bm{W}_1 \bm{\tilde{Y}} + \bm{W}_2 \bm{\tilde{X}}^{k}). 
\end{eqnarray}

The key idea of the proposed unrolled method for group-sparse matrix estimation problem is to view matrices $\bm{W}_1$, $\bm{W}_2$, and scalars $\theta^k$ in (\ref{eq:ista}) as trainable parameters. 
As a result, (\ref{eq:ista}) can be modeled as a one layer recurrent neural network (RNN), i.e., one iteration of ISTA-GS is treated as one layer neural network.
The neural network structure can be concatenated as a $K$-layer RNN, which is capable of modeling the ISTA-GS with $K$ iterations.
Mathematically, the unrolled RNN with $K$ iterations for group-sparse matrix estimation problem is given by
\begin{eqnarray}\label{eq:LISTA1}
\bm{\tilde{X}}^{k+1} = \eta_{\theta^k}(\bm{W}_1^k \bm{\tilde{Y}} + \bm{W}_2^k \bm{\tilde{X}}^{k}), k = 0, 1, \ldots, K-1, 
\end{eqnarray}
where parameters $\bm {\Theta} = \{ \bm{W}_1^k,\bm{W}_2^k, \theta^k \}_{k=0}^{K-1} $ are trainable. This is the main difference from the iterative algorithm in (\ref{ISTA}). 

The unrolled neural network structure given in \eqref{eq:LISTA1} is named as LISTA-GS, and the corresponding structure is plotted in Fig. \ref{fig:a}. However, this kind of structure contains too many parameters with two high-dimensional weight parameter matrices, which may not be efficient to be trained.

\subsubsection{LISTA-GSCP}

We establish the following necessary condition for the convergence of LISTA-GS, which is inspired by the fact that $\bm{W}_2^0 = \bm I-\bm{W}_{1}^{0}\bm{\tilde{S}}$ and coupled structure in \cite{chen2018theoretical}.
This necessary condition reveals the properties of trainable parameters if the proposed unrolled NN can recover the group sparse matrix and can be used to simplify the proposed network.
\begin{theorem}\label{thm:main1}
        Given $\{\bm{W}_{1}^{k}, \bm{W}_{2}^{k},   \theta^{k}\}_{k=0}^{\infty}$ with bounded weights, i.e., $   \|\bm{W}_{1}^{k}\|_F \leq C_{W_1}$ and $\| \bm{W}_{2}^{k} \|_F \leq C_{W_2}$, let $\{\bm{\tilde  X}^{k}\}_{k=1}^{\infty}$ be generated layer-wise by LISTA-GS in \eqref{eq:LISTA1} with an input $\bm{\tilde{Y}}$ observed by 
        \eqref{mmv} and the initial point $\bm{\tilde  X}^{0}=\bm{0}$.
         If network can recover any group-row-sparse signal with no observation noise, then $\{\bm{W}_{1}^{k},\bm{W}_{2}^{k},   \theta^{k}\}_{k=0}^{\infty}$ must satisfy the following two conditions
\begin{align}   
& \theta^{k}  \rightarrow 0 , \quad \text {as } k \rightarrow \infty,  \label{thm1:result1}\\
&\bm{W}_{2}^{k}-(\bm I-\bm{W}_{1}^{k} \bm{\tilde{S}}) \rightarrow \bm{0} , \quad \text{as } k \rightarrow \infty. \label{thm1:result2} 
\end{align}
\end{theorem}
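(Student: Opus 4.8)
The plan is to show that noiseless exact recovery over the whole admissible class is impossible unless the thresholds vanish and the two weight matrices become coupled through $\bm{\tilde{S}}$. I read the recovery hypothesis as: for every admissible group-row-sparse $\bm{\tilde{X}}^{\natural}$ with $\bm{\tilde{Y}}=\bm{\tilde{S}}\bm{\tilde{X}}^{\natural}$, the iterates satisfy $\bm{\tilde{X}}^{k}\to\bm{\tilde{X}}^{\natural}$. The first step is to substitute this noiseless observation into \eqref{eq:LISTA1} and isolate the quantity I want to control. Writing $\bm{W}^{k}:=\bm{W}_{2}^{k}-(\bm I-\bm{W}_{1}^{k}\bm{\tilde{S}})$ and using $\bm{\tilde{Y}}=\bm{\tilde{S}}\bm{\tilde{X}}^{\natural}$, the pre-activation rearranges as
\begin{equation}
\bm{v}^{k}:=\bm{W}_{1}^{k}\bm{\tilde{Y}}+\bm{W}_{2}^{k}\bm{\tilde{X}}^{k}=\bm{\tilde{X}}^{\natural}+\bm{W}^{k}\bm{\tilde{X}}^{\natural}+\bm{E}^{k},
\end{equation}
with $\bm{E}^{k}:=\bm{W}_{2}^{k}(\bm{\tilde{X}}^{k}-\bm{\tilde{X}}^{\natural})$. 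Because $\|\bm{W}_{2}^{k}\|_{F}\le C_{W_2}$ and $\bm{\tilde{X}}^{k}\to\bm{\tilde{X}}^{\natural}$, the residual obeys $\bm{E}^{k}\to\bm 0$; and since $\bm{W}_{1}^{k},\bm{W}_{2}^{k}$ are bounded and $\bm{\tilde{S}}$ is fixed, $\bm{W}^{k}$ is bounded in $k$. This reduces the theorem to proving $\theta^{k}\to 0$ and $\bm{W}^{k}\to\bm 0$.

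Next I would analyze the MSTO row by row, using $\bm{\tilde{X}}^{k+1}=\eta_{\theta^{k}}(\bm{v}^{k})\to\bm{\tilde{X}}^{\natural}$. Let $S$ be the set of nonzero rows of $\bm{\tilde{X}}^{\natural}$. For $n\in S$ the limit $\bm{\tilde{X}}^{\natural}[n,:]\neq\bm 0$ forces $\|\bm{v}^{k}[n,:]\|_{2}>\theta^{k}$ for large $k$, so $\eta_{\theta^{k}}(\bm{v}^{k})[n]=\bm{v}^{k}[n,:]-\theta^{k}\hat{\bm u}_{n}^{k}$ with the unit vector $\hat{\bm u}_{n}^{k}=\bm{v}^{k}[n,:]/\|\bm{v}^{k}[n,:]\|_{2}$; subtracting the identity term of the decomposition gives $(\bm{W}^{k}\bm{\tilde{X}}^{\natural})[n,:]-\theta^{k}\hat{\bm u}_{n}^{k}\to\bm 0$. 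For $n\notin S$, the requirement $\eta_{\theta^{k}}(\bm{v}^{k})[n]\to\bm 0$ forces $\|(\bm{W}^{k}\bm{\tilde{X}}^{\natural})[n,:]\|_{2}\le\theta^{k}+o(1)$. I would also record here that $\theta^{k}$ is bounded, since for a fixed $n\in S$ the active-branch identity $\|\bm{v}^{k}[n,:]\|_{2}-\theta^{k}\to\|\bm{\tilde{X}}^{\natural}[n,:]\|_{2}$ together with the boundedness of $\bm{v}^{k}[n,:]$ keeps $\theta^{k}$ finite.

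The decisive step is a scaling/test-signal argument that isolates $\theta^{k}$. I would feed in a one-hot signal with a single nonzero row, $\bm{\tilde{X}}^{\natural}=B\,\bm e_{j}\bm e_{m}^{T}$, for which the row-$j$ relation becomes $B\bm{W}^{k}[j,j]\,\bm e_{m}^{T}-\theta^{k}\hat{\bm u}_{j}^{k}\to\bm 0$. The observation that defeats the norm-coupling of the MSTO is that the recovered row converges to the positive multiple $B\bm e_{m}^{T}$, and since the MSTO output is a nonnegative scalar times $\bm{v}^{k}[j,:]$, the direction must satisfy $\hat{\bm u}_{j}^{k}\to\bm e_{m}^{T}$. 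Replacing $\hat{\bm u}_{j}^{k}$ by $\bm e_{m}^{T}$ reduces the relation to the scalar statement $B\bm{W}^{k}[j,j]-\theta^{k}\to 0$; repeating with the scaled signal $2B\,\bm e_{j}\bm e_{m}^{T}$ (same limiting direction) gives $2B\bm{W}^{k}[j,j]-\theta^{k}\to 0$, and subtracting the two yields $B\bm{W}^{k}[j,j]\to 0$, hence $\theta^{k}\to 0$, which is \eqref{thm1:result1}.

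Finally, with $\theta^{k}\to 0$ the two row-wise relations collapse to $\bm{W}^{k}\bm{\tilde{X}}^{\natural}\to\bm 0$ for every admissible $\bm{\tilde{X}}^{\natural}$. Evaluating this on the one-hot family $\bm{\tilde{X}}^{\natural}=B\bm e_{j}\bm e_{m}^{T}$ reads off the $j$-th column, $B\bm{W}^{k}[:,j]\to\bm 0$, and ranging over all $j$ forces $\bm{W}^{k}\to\bm 0$, i.e. $\bm{W}_{2}^{k}-(\bm I-\bm{W}_{1}^{k}\bm{\tilde{S}})\to\bm 0$, which is \eqref{thm1:result2}. The main obstacle I anticipate is exactly the row-coupling introduced by the MSTO: unlike the scalar soft-threshold, the ``sign'' is replaced by the data-dependent unit direction $\hat{\bm u}_{n}^{k}$, which need not converge a priori; the key that unlocks the argument is using the recovery hypothesis itself to pin this direction to a fixed coordinate direction, after which the scalar scaling trick of the LISTA analysis carries through. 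A secondary point to handle carefully is the uniformity of the $o(1)$ terms, which is avoided here because only finitely many one-hot test signals are actually used.
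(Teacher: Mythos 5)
Your proposal is correct, and it follows the same high-level strategy as the paper's proof (exploit exact noiseless recovery to derive limiting relations on the pre-activations, then use the doubling/scaling trick on $\bm{\tilde X}^{\natural}$ versus $2\bm{\tilde X}^{\natural}$ to decouple $\theta^k$ from the $(\bm I-\bm W_2^k-\bm W_1^k\bm{\tilde S})\bm{\tilde X}^{\natural}$ term), but the execution differs in two substantive ways. First, the paper works with general signals in a restricted class $\tilde{\mathcal{X}}(\beta,\beta/10,s,0)$ and handles the MSTO row-coupling implicitly, by observing that the shrinkage displacement on each active row has $\ell_2$-norm exactly $\theta^k$; squaring and summing yields the identity $\|(\bm I-\bm W_2^k[\mathcal I,\mathcal I]-\bm W_1^k[\mathcal I,:]\bm{\tilde S}[:,\mathcal I])\bm{\tilde X}^{\natural}[\mathcal I,:]\|_F^2=-\sum_j\xi_j^k+|\theta^k|^2|\mathcal I|$, and the doubling trick is applied to this quadratic relation ($3|\theta^k|^2|\mathcal I|=4\sum_j\xi_j^k-\sum_j\hat\xi_j^k$), at the cost of the $\xi_j^k$ bookkeeping and a uniform-convergence hypothesis over the whole class. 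You instead feed in finitely many one-hot test signals $B\,\bm e_j\bm e_m^T$ and handle the coupling explicitly by pinning the shrinkage direction $\hat{\bm u}_j^k\to\bm e_m^T$ from the convergence of the output row, which reduces everything to scalar relations linear in $\theta^k$ and lets you read off each column $\bm W^k[:,j]$ directly; this avoids the squared-norm identity, the error terms, and any uniformity issue. Both routes are sound; yours is the more elementary and arguably cleaner, while the paper's version establishes the vanishing of $\bm I-\bm W_2^k[\mathcal I,\mathcal I]-\bm W_1^k[\mathcal I,:]\bm{\tilde S}[:,\mathcal I]$ uniformly over all supports $\mathcal I$ of size up to $s$ along the way. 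The only housekeeping your write-up leaves implicit is that $B$ must be chosen with $2B\le\beta$ so that the doubled test signal stays admissible, mirroring the paper's use of $\tilde{\mathcal X}(\beta/2,\beta/10,s,0)$.
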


\begin{proof}
        Please refer to Appendix \ref{pr:th1}.
\end{proof}

Thus, we develop the necessary condition of the recovery convergence for the group-sparse matrix estimation problem. 
It is worth nothing that the MSTO \cite{puig2011multidimensional} is a generalization of the SSTO, which brings unique challenge of revealing the weight coupling structure on the group-row-sparse matrix estimation problem.
The extension turns out to be non-trivial since the MSTO of group-sparse matrix breaks up the 
\textquotedblleft independency" between sparse vectors.

Motivated by (\ref{thm1:result2}) in Theorem \ref{thm:main1}, we adopt the partial weight coupling structure of the trainable weights $\{ \bm{W}_1^k,\bm{W}_2^k \}_{k=0}^{\infty} $ in LISTA-GS as
$
        \bm{W}_2^k = \bm I-\bm{W}_{1}^{k}\bm{\tilde{S}},  \forall k.
$
Hence, by letting $ \bm{W}^k = (\bm{W}_1^k)^T$, we obtain the simplified $K$-layer unrolled neural network structure (namely LISTA-GSCP) as
\begin{equation}\label{eq:LISTA-CP}
\bm{\tilde{X}}^{k+1} = \eta_{\theta^k} \big( \bm{\tilde{X}}^k + (\bm{W}^k)^T (\bm{\tilde{Y}} - \bm{\tilde{S}} \bm{\tilde{X}}^k) \big), k = 0, 1, \ldots, K-1, 
\end{equation}
where parameters $\bm {\Theta} = \{ \bm{W}^k,  \theta^k  \}_{k=0}^{K-1} $ are trainable. Fig. \ref{fig:b} illustrates the unrolled network.

\subsubsection{ALISTA-GS}
To further alleviate the need to learn a weight matrix $\bm{W}^k $ with a large amount of parameters while stabilizing the training process, we separate the weight as the product of a scalar and a matrix, i.e., $\bm{W}^k = \gamma^k \bm{W} $. 
The definition of  \textquotedblleft good" parameters $\bm{W} \in \mathcal{W}(\bm{\tilde{S}})$ in \eqref{g3} shows that the weight matrix only depends on $ \bm{\tilde{S}}$.
Thus, we can solve the following convex optimization problem by using the projected gradient descend (PGD) algorithm to obtain the weight matrix $\bm{W}$ prior to the training stage
\begin{align}
        &\underset{\bm{W} \in \R^{2L \times 2N}}{\mini} \quad \| \bm{W}^T \bm{\tilde{S}} \|_{F}^2 \notag \\
        &\subj \quad \bm{W}[:,i]^T \bm{\tilde{S}}[:,i] = 1, \forall i \in [2N].
\end{align}
Hence, the proposed third unrolled network structure (namely ALISTA-GS) comes to light
\begin{equation}\label{eq:ALISTA}
\bm{\tilde{X}}^{k+1} = \eta_{\theta^k} \big( \bm{\tilde{X}}^k + \gamma^{k} \bm{W}^T (\bm{\tilde{Y}} - \bm{\tilde{S}} \bm{\tilde{X}}^k) \big), k = 0, 1, \ldots, K-1, 
\end{equation}
where $\bm {\Theta} = \{ \theta^k ,  \gamma^k   \}_{k=0}^{K-1} $ are parameters to be trained and the weight matrix $\bm{W}$ is pre-computed before the training process. 
Different from the LISTA-GSCP, this network structure pays more attention to select better learning step sizes.
A visual depiction is provided in Fig. \ref{fig:c}.

Till now, we have established three LISTA variations for group-sparse matrix estimation problems. With less training parameters, the training process turns to be easier and faster. 
\begin{remark}
Recently, the advancement of complex-valued deep networks make it possible to apply the proposed LISTA variations in a complex-valued representation, i.e., without rewriting as its real-valued counterpart (\ref{realmmv}).
The complex-valued representation may be utilized to further enhance the performance of sparse recovery by exploiting the extra information about potential grouping of real and imaginary parts, e.g., the real and imaginary parts are either zero or nonzero simultaneously \cite{male2013camp, ta2020ctista}.
However, the design of the complex-valued network training process and the complex-valued shrinkage function will be two challenges.
Besides, the convergence analysis of such complex-valued LISTA variations will be an interesting problem, which will be studied in our future work.
\end{remark}

\subsection{Deep Neural Networks Training and Testing}
In the training stage, we consider the framework that the neural networks learn to solve group-sparse matrix estimation problem $\mathscr{P}_r$ in a supervised way.
Note that we denote $P$ samples training data as $ \{ \bm{\tilde{X}}_i^{\natural}, \bm{\tilde{Y}}_i\}_{i=1}^P$, where $ \bm{\tilde{X}}_i^{\natural}$ and $\bm{\tilde{Y}}_i $ are viewed as the label and the input, respectively. 
The following optimization problem is adopted in the whole training stage for training a $k$-layer network,
\begin{equation}\label{train}
\mathop{\mini}\limits_{\bm \Theta_{0:k-1}}~ \sum_{i=1}^{P} \Vert \bm{\tilde{X}}^k(\bm \Theta_{0:k-1}, \bm{\tilde{Y}_i}, \bm{\tilde{X}}^{0}) - \bm{\tilde{X}}_i^{\natural} \Vert_F^2.
\end{equation}
However, such large scale optimization easily converged to a bad local minimum \cite{bor2017amp}.
The layer-wise training strategy is used to separate (\ref{train}) into the following two parts, i.e., (\ref{tr:1}) and (\ref{tr:2}), where (\ref{tr:1}) is to find a good initialization of (\ref{tr:2}) to avoid bad local minima of (\ref{train}).
In particular, for training a $k$-layer network, we denote $ \bm{\Theta}_{0:k-1}$ as the trainable parameters from layer $0$ to layer $k-1$, and $ \bm{\Theta}_{k-1}$ as the trainable parameters in layer $k-1$. 
By fixing the parameters $\bm{\Theta}_{0:k-2} $ of the former $k-1$-layer network that has been trained and the parameters $\bm{\Theta}_{k-1} $'s initialization is given in Table \ref{ta:1}, we first learn the parameters $\bm{\Theta}_{k-1} $ with the learning rate $\alpha_0 $ by using Adam algorithm \cite{kabashima2003cdma} to solve the following optimization problem
\begin{equation}\label{tr:1}
\mathop{\mini}\limits_{\bm \Theta_{k-1}}~ \sum_{i=1}^{P} \Vert \bm{\tilde{X}}^k(\bm \Theta_{0:k-1}, \bm{\tilde{Y}}_i, \bm{\tilde{X}}^{0}) - \bm{\tilde{X}}^{\natural} \Vert_F^2,
\end{equation}
where $\bm{\tilde{X}}^k(\bm \Theta_{0:k-1}, \bm{\tilde{Y}}, \bm{\tilde{X}}^{0})$ denotes the output of the $k$-layer network with input $\bm{\tilde{Y}}$ and initial point $ \bm{\tilde{X}}^{0}$.
Then, we use the parameters $\bm{\Theta}_{k-1}$ obtained by \eqref{tr:1} and the fixed parameters $\bm{\Theta}_{0:k-2} $ as initialization and then tune all parameters $ \bm{\Theta}_{0:k-1}$ with the learning rate $\alpha_1 $ which is smaller than  $\alpha_0 $ by using Adam algorithm to solve
\begin{equation}\label{tr:2}
\mathop{\mini}\limits_{\bm \Theta_{0:k-1}}~ \sum_{i=1}^{P} \Vert \bm{\tilde{X}}^k(\bm \Theta_{0:k-1}, \bm{\tilde{Y}}_i, \bm{\tilde{X}}^{0}) - \bm{\tilde{X}}^{\natural} \Vert_F^2.
\end{equation}
After applying the procedure successfully, we obtain the parameters $ \bm{\Theta}_{0:k-1}$ for the whole $k$-layer network.
So next we can train a $(k+1)$-layer network.

In the testing stage, since the known preamble signature $\bm{S}$ remains unchanged, the proposed unrolled networks can recover the changing channels.
Given a newly received signal $\bm{\tilde{Y}}^{'} $, the learned unrolled networks, i.e., LISTA-GS in \eqref{eq:LISTA1}, LISTA-GSCP in \eqref{eq:LISTA-CP} and ALISTA-GS in \eqref{eq:ALISTA}, are applied for group-sparse matrix estimation.
For instance, with LISTA-GS, we obtain the $(k+1)$-th layer recovered signal by using $\bm{\tilde{X}}^{k+1} = \eta_{(\theta^k)^{*}}((\bm{W}_1^k)^{*} \bm{\tilde{Y}}^{'} + (\bm{W}_2^k)^{*} \bm{\tilde{X}}^{k})$, where $(\theta^k)^{*}, (\bm{W}_1^k)^{*}$, and $(\bm{W}_2^k)^{*} $ are the learned parameters of the $k$-th layer.
\begin{remark}
Most of the existing deep learning based approaches including the proposed LISTA variations rely on the assumption that the channel coefficients follow the same distribution during the training and testing stages, and may not work well in dynamic environment with different channel distributions over time.
The continual learning and transfer learning technologies \cite{sun2021continus, yuan2021transfer} that have been developed for resource allocation and beamforming optimization in dynamic environment have the potential to address this issue. 
As these technologies are still in the early stage, we leave the proposed LISTA variations to recover the group-sparse matrix in dynamic environment for future work.
\end{remark}

\subsection{Time Complexity and Number of Trainable Parameters}
The time complexity for the three neural network structures are mainly due to matrix multiplication.
For LISTA-GS, the evaluation of the matrix multiplication $\bm{W}_1^k \bm{\tilde{Y}}$ and $\bm{W}_2^k \bm{\tilde{X}}^{k}$ require $\mathcal{O}(NLM + N^2M)$ time at each iteration.
As for LISTA-GSCP and ALISTA-GS, the evaluation of the matrix multiplication $(\bm{W}^k)^T (\bm{\tilde{Y}} - \bm{\tilde{S}} \bm{\tilde{X}}^k)$ and $\gamma^k \bm{W}^T (\bm{\tilde{Y}} - \bm{\tilde{S}} \bm{\tilde{X}}^k)$ require $\mathcal{O}(LNM)$ time at each iteration.
Since the proposed methods converge faster than ISTA and with the same complexity per iteration, thereby reducing the computational cost.

For $K$-layer RNN,
the ALISTA-GS contains only $2K$ total trainable parameters $\{  \theta^k ,  \gamma^k   \}_{k=0}^{K-1}$,
while LISTA-GS and LISTA-GSCP require $K(N^2+LN+1)$ variables
$\{ \bm{W}_1^k,\bm{W}_2^k, \theta^k \}_{k=0}^{K-1}$ and  $ K(LN+1)$ variables $\{ \bm{W}^k,  \theta^k  \}_{k=0}^{K-1}$,  respectively.
We summarize the initialization and the required numbers of trainable parameters for the $K$-layer networks in Table \ref{ta:1}.
Since $\theta^k$ and $\gamma^k$ should be initialize to proper constants, we initialize $\theta^k = 0.1$ and $\gamma^k=1$ in this paper.
\begin{table*}[!t]
        \centering
        \caption{Number of trainable parameters(params) and initialization in the $K$- layer RNN.}
        \begin{tabular}{c|c|c|c}
                \toprule
                Network&Trainable params&Initialization&Number of params\\
                \midrule  
                LISTA-GS&$\{ \bm{W}_1^k,\bm{W}_2^k, \theta^k \}_{k=0}^{K-1}$&$ \bm{W}_1^k = \frac{1}{C}\bm{\tilde{S}}^T$, $\bm{W}_2^k = \bm{I} -  \frac{1}{C}\bm{\tilde{S}}^T \bm{\tilde{S}}, \theta^k = 0.1$&$K(N^2 + LN + 1)$ \\
                \midrule 
                LISTA-GSCP&$ \{ \bm{W}^k,  \theta^k  \}_{k=0}^{K-1}$&$ \bm{W}^k = \frac{1}{C}\bm{\tilde{S}}^T,\theta^k = 0.1$&$ K(LN+1)$ \\
                \midrule 
                ALISTA-GS&$\{ \theta^k ,  \gamma^k   \}_{k=0}^{K-1}$&$\theta^k = 0.1, \gamma^k =1.0 $  &$2K$ \\
                \bottomrule
        \end{tabular}
        \label{ta:1}
\end{table*}

\section{Convergence Analysis}\label{sec:con}
In this section, we provide the main theoretical results of this paper, i.e., 
the linear convergence rate of LISTA-GSCP \eqref{eq:LISTA-CP} and ALISTA-GS \eqref{eq:ALISTA}, respectively.
Since the proposed unrolled networks inherit the structure of ISTA-GS, they thus allow us to track the interpretability for such deep learning framework from the perspective of optimization.
As a matter of fact, our proposed unrolled neural networks are extensions of \cite{chen2018theoretical, liu2018alista} to solve the group-sparse matrix estimation problems.
Such dimension expansion and matrix structures bring the unique and formidable challenges for establishing the theoretical analysis on the proposed unrolled neural networks.

We firstly establish the convergence analysis of LISTA-GSCP framework.
We use $\bm{\tilde  X}^{k}$ to replace $\bm{\tilde  X}^{k}(\bm{\tilde  X}^{\natural}, \bm{\tilde{Z}})$ for notational simplicity.
The following theorem presents the convergence rate of LISTA-GSCP.
For theoretical analysis, we assume that $\ell_2$ norm of all rows of signal $\bm{\tilde  X}^{\natural}$ and Frobenius norm of AWGN noise $\bm{\tilde  Z}$ are bounded by $\beta$ and $\sigma$ \cite{chen2018theoretical, liu2018alista}.
Furthermore, since each entry of the activity sequence $ \{a_1,...,a_N\}$ follows the Bernoulli distribution, we assume that the number of non-zero rows on signal $\bm{\tilde  X}^{\natural}$ is bounded by a small number $s$ \cite{jiang2018joint}.
And for notation brevity, we assume the signal $\bm{\tilde  X}^{\natural}$ and noise $\bm{\tilde  Z}$ belong to set $\mathcal{X} (\beta, s, \sigma) 
        \coloneqq  \big\{ (\bm{\tilde  X}^{\natural}, \bm{\tilde{Z}} )|  \| \bm{\tilde{X}}^{\natural}[i,:] \|_2 \leq \beta, \forall i,
        \| \bm{\tilde  X}^{\natural}\|_{2,0} \leq s, \notag \|\bm{\tilde{Z}} \|_{F} \leq \sigma \big\}.$

\begin{theorem}[Convergence rate of LISTA-GSCP]\label{thm:convergence1}
        Given $\{\bm{W}^k, \theta^{k} \}_{k=0}^{\infty}$,  let $\{\bm{\tilde  X}^{k}\}_{k=1}^{\infty}$
        be generated by LISTA-GSCP in \eqref{eq:LISTA-CP} with an input $\bm{\tilde{Y}}$ observed by \eqref{mmv} and initial point $\bm{\tilde  X}^{0}=\bm{0}$. If $s$ is sufficiently small, then for all $(\bm{\tilde  X}^{\natural}, \bm{\tilde{Z}}) \in \mathcal{X}(\beta, s, \sigma),$ we have the error bound: 
        \begin{equation}\label{eq:conrate1}
        \|\bm{\tilde  X}^{k}-\bm{\tilde  X}^{\natural}\|_{F} \leq s \beta \exp (-c k)+ C \sigma,
        \end{equation}
        where $c>0$ and $C>0$ are constants that depend only on $\bm{\tilde{S}}$ and $s$.
\end{theorem}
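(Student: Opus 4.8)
The plan is to track the row-wise error matrix $\bm{E}^k = \bm{\tilde{X}}^k - \bm{\tilde{X}}^{\natural}$ in the mixed norm $\|\cdot\|_{2,1}$, which is the natural invariant for the MSTO, and to exploit that the weight coupling $\bm{W}_2^k = \bm I - (\bm{W}^k)^T\bm{\tilde{S}}$ is already hard-wired into LISTA-GSCP \eqref{eq:LISTA-CP} by construction (this is exactly what Theorem \ref{thm:main1} motivated). First I would substitute $\bm{\tilde{Y}} = \bm{\tilde{S}}\bm{\tilde{X}}^{\natural} + \bm{\tilde{Z}}$ into the pre-threshold quantity $\bm{V}^{k+1} = \bm{\tilde{X}}^k + (\bm{W}^k)^T(\bm{\tilde{Y}} - \bm{\tilde{S}}\bm{\tilde{X}}^k)$ and read off each row. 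Using the normalization constraint $\bm{W}^k[:,n]^T\bm{\tilde{S}}[:,n] = 1$ that defines the ``good'' weights $\mathcal{W}(\bm{\tilde{S}})$ in \eqref{g3}, the diagonal term cancels the estimate and leaves
\begin{equation}
\bm{V}^{k+1}[n,:] - \bm{\tilde{X}}^{\natural}[n,:] = \sum_{j \neq n} \big(\bm{W}^k[:,n]^T\bm{\tilde{S}}[:,j]\big)\,\bm{E}^k[j,:] + \big(\bm{W}^k[:,n]\big)^T \bm{\tilde{Z}}. \notag
\end{equation}
Introducing the generalized mutual coherence $\tilde{\mu} = \max_{i \neq j} |\bm{W}^k[:,i]^T\bm{\tilde{S}}[:,j]|$, the first term is bounded by $\tilde{\mu}\,\|\bm{E}^k\|_{2,1}$ in row-$\ell_2$ norm and the noise term by $C_W\sigma$ with $\|\bm{\tilde{Z}}\|_F \leq \sigma$.

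The argument then runs as a two-part induction. For the \emph{no-false-positive} step, for $n \notin \supp$-rows one has $\bm{\tilde{X}}^{\natural}[n,:]=\bm 0$, so $\|\bm{V}^{k+1}[n,:]\|_2 \leq \tilde{\mu}\|\bm{E}^k\|_{2,1} + C_W\sigma$; choosing the threshold $\theta^k$ at least this large forces the MSTO to annihilate that row, keeping $\supp(\bm{\tilde{X}}^{k+1})$ inside the true set of at most $s$ active rows. For the \emph{on-support} step, I would use the elementary shrinkage/non-expansiveness estimate for the group soft-threshold, namely $\|\eta_{\theta}(\bm v) - \bm x\|_2 \leq \|\bm v - \bm x\|_2 + \theta$ for any row vectors (verified by the two cases $\|\bm v\|_2 \gtrless \theta$). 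Summing over the $\leq s$ support rows and using $\sum_{n}\sum_{j\neq n}\|\bm{E}^k[j,:]\|_2 \leq s\|\bm{E}^k\|_{2,1}$ yields the closed recursion
\begin{equation}
\|\bm{E}^{k+1}\|_{2,1} \leq s\tilde{\mu}\,\|\bm{E}^k\|_{2,1} + s\theta^k + s C_W\sigma, \notag
\end{equation}
and with $\theta^k$ set to the support-selection value this collapses to $\|\bm{E}^{k+1}\|_{2,1} \leq 2s\tilde{\mu}\,\|\bm{E}^k\|_{2,1} + C'\sigma$.

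To finish, the contraction factor $2s\tilde{\mu}$ must be strictly less than one, which is precisely the hypothesis that $s$ is sufficiently small; I would then set $c = -\log(2s\tilde{\mu}) > 0$. Unrolling the recursion from the initialization $\bm{\tilde{X}}^0 = \bm 0$, so that $\|\bm{E}^0\|_{2,1} = \|\bm{\tilde{X}}^{\natural}\|_{2,1} = \sum_{n}\|\bm{\tilde{X}}^{\natural}[n,:]\|_2 \leq s\beta$, and summing the geometric noise series $\sum_k (2s\tilde{\mu})^k \leq (1-2s\tilde{\mu})^{-1}$ gives $\|\bm{E}^k\|_{2,1} \leq s\beta\exp(-ck) + C\sigma$. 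The claimed bound follows from $\|\bm{E}^k\|_F \leq \|\bm{E}^k\|_{2,1}$, with $c$ and $C$ depending only on $\bm{\tilde{S}}$ (through $\tilde{\mu}$ and $C_W$) and on $s$.

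The main obstacle is that the MSTO couples the entries within each row, so the two supporting facts that are immediate in the scalar LISTA-CP analysis of \cite{chen2018theoretical} must be re-derived in row-$\ell_2$ form: the support-selection lemma (off-support rows are zeroed whenever $\|\bm{V}^{k+1}[n,:]\|_2 \leq \theta^k$) and the group shrinkage bound $\|\eta_{\theta}(\bm v) - \bm x\|_2 \leq \|\bm v - \bm x\|_2 + \theta$. Getting these right, and recognizing that the recursion must be propagated in the $\|\cdot\|_{2,1}$ norm rather than the Frobenius norm before converting at the very end, is the delicate part of extending the vector result to the group-sparse matrix setting.
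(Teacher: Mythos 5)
Your proposal is correct and follows essentially the same route as the paper's proof: the same ``good'' choice of $\theta^k$ (coherence times the supremal $\|\cdot\|_{2,1}$ error plus a noise term), the same inductive no-false-positive lemma, the same row-wise recursion in the mixed $\ell_1/\ell_2$ norm with contraction factor of order $s\tilde{\mu}$, and the same final conversion via $\|\cdot\|_F \leq \|\cdot\|_{2,1}$. The only cosmetic difference is that you bound the on-support shrinkage error by the non-expansiveness estimate $\|\eta_{\theta}(\bm v)-\bm x\|_2 \leq \|\bm v-\bm x\|_2+\theta$, whereas the paper uses the equivalent subgradient inclusion with $\|\partial\|\cdot\|_2\|_2 \leq 1$, yielding the marginally sharper factor $(2s-1)\tilde{\mu}$ in place of your $2s\tilde{\mu}$.
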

Especially, for the noiseless case, \eqref{eq:conrate1} reduces to
$
        \|\bm{\tilde  X}^{k}-\bm{\tilde  X}^{\natural}\|_{F} \leq s \beta \exp (-c k).
$
Moreover, LISTA-GSCP converges at an $ \mathcal{O}(\log(\frac{1}{\epsilon}))$ rate, which is faster than original ISTA of $\mathcal{O}(\frac{1}{\epsilon})$ and Nesterov's method \cite{liu2009multi} of $\mathcal{O}(\frac{1}{\sqrt{\epsilon}})$.

The steps of proving Theorem \ref{thm:convergence1} are summarized as follows:
\begin{itemize}
        \item[A.] \textbf{\textquotedblleft Good" parameters for leaning.}
        We define the \textquotedblleft good" parameters that guarantee the linear convergence rate.
        \item[B.] \textbf{Error bound for one sample data.} We establish an error bound for one sample $(\bm{\tilde {X}}^{\natural},\bm{\tilde{Z}})$\footnote{To emphasize the signal and noise, we use $(\bm{\tilde {X}}^{\natural},\bm{\tilde{Z}})$ to replace one sample rather than $\{ \bm{\tilde{X}}^{\natural}, \bm{\tilde{Y}}\} $.}.
        \item[C.]  \textbf{Error bound for the whole data set.} By taking the supremum over all $(\bm{\tilde {X}}^{\natural},\bm{\tilde{Z}})$, we establish an error bound over the whole samples.
\end{itemize}
\subsection{\textquotedblleft Good" Parameters for Learning}
In this subsection, we shall give the definition of \textquotedblleft good" parameters for the LISTA-GSCP network structure that guarantees the linear convergence rate.

First, we need to introduce several fundamental definitions inspired by \cite{chen2018theoretical}.
The first one is the mutual coherence \cite{gribonval2003sparse} of $\bm{\tilde{S}}$, which characterizes the coherence between different columns of $\bm{\tilde{S}}$.

\begin{definition}
\begin{itemize}
        \item[(i)] The mutual coherence of $\bm{\tilde{S}} \in \R^{2L \times 2N}$ with normalized columns is defined as
        \begin{equation}
        \mu(\bm{\tilde{S}})=\max _{i \neq j \atop 1 \leq i, j \leq 2N} \big|(\bm{\tilde{S}}[:,i])^{T}\bm{\tilde{S}}[:,j] \big| .
        \end{equation}
        
        \item[(ii)] The generalized mutual coherence of $\bm{\tilde{S}} \in \R^{2L \times 2N}$ with normalized columns is defined as
        \begin{equation}\label{eq:g-mutual-coherence}
        \tilde{\mu}(\bm{\tilde{S}})=\inf_{\bm{W} \in \R^{2L \times 2N} \atop (\bm W[:,i])^{T} \bm{\tilde{S}}[:,i]=1,\forall i} \bigg\{ \max _{i \neq j \atop 1 \leq i, j \leq 2N} \Big|(\bm W[:,i])^{T} \bm{\tilde{S}}[:,j] \Big| \bigg\}.
        \end{equation}
\end{itemize}
\end{definition}

Lemma $1$ in \cite{chen2018theoretical} tells us that there exists a matrix $ \bm{W} \in \R^{2L \times 2N}$ that attaches the infimum given in \eqref{eq:g-mutual-coherence}, i.e., $\mathcal{W}(\bm{\tilde{S}}) \neq \emptyset,$
where a set of \textquotedblleft good" weight matrices is defined as
\begin{align}\label{chwei}
         \mathcal{W}(\bm{\tilde{S}}) \coloneqq \argmin_{\bm{W} \in \R^{2L \times 2N}} \big\{ \| \bm{W}\|_{\max} \big| 
         (\bm W[:,i])^{T} \bm{\tilde{S}}[:,i]=1, \forall i,  \notag \\
         \max _{i \neq j \atop 1 \leq i, j \leq 2N} \big|(\bm W[:,i])^{T} \bm{\tilde{S}}[:,j]\big| =\tilde{\mu}(\bm{\tilde{S}})     
          \big\}. 
\end{align}

Then, we define the \textquotedblleft good" parameters to be learned in LISTA-GSCP as follows.

\begin{definition}\label{def:good1}
$\bm{\Theta}= \{\bm{W}^k, \theta^{k}\}_{k=0}^{\infty}$ are called \textquotedblleft good" parameters in LISTA-GSCP if they satisfy
        \begin{equation}\label{g1}
                \bm{W}^k \in \mathcal{W}(\bm{\tilde{S}}), \quad
                \theta^{k}=\tilde{\mu} \sup_{(\bm{\tilde{X}}^{\natural}, \bm{\tilde{Z}} )} \|\bm{\tilde{X}}^{k} - \bm{\tilde{X}}^{\natural}\|_{2,1}+ \sigma C_W, \quad \forall k\in \mathbb{N},
        \end{equation}
        where  $C_W = \underset{k\geq 0}{\max}~ \| \bm{W}^k \|_{2,1}$ and $\tilde{\mu}=\tilde{\mu} (\bm{\tilde{S}})$.
\end{definition}

In Definition \ref{def:good1}, we propose the \textquotedblleft good" choice of the learning parameters in LISTA-GSCP.
In the following subsection, we prove that the sequence of \textquotedblleft good" parameters lead to the conclusion \eqref{eq:conrate1} in Theorem \ref{thm:convergence1}.

\subsection{Error Bound for One Sample} 
In this subsection, we give an upper bound of the recovery error for one sample $(\bm{\tilde {X}}^{\natural},\bm{\tilde{Z}}) \in  \mathcal{X} (\beta, s, \sigma) $.
We first introduce the extra notation $\psi$, to provide information about the group sparsity.
For each $\bm{\tilde{X}} \in \mathbb{R}^{2N \times M}$, we define a function $\psi:\mathbb{R}^{2N \times M} \rightarrow \mathbb{R}^{2N}$ as 
\begin{equation}
        \psi(\bm{\tilde{X}}) = 
        \begin{bmatrix}
        ||\bm{\tilde{X}}[1,:] ||_2,
        ||\bm{\tilde{X}}[2,:]||_2,
        \cdots,
        ||\bm{\tilde{X}}[2N,:] ||_2
        \end{bmatrix}^T .
\end{equation}
Specifically, for a vector $\bm{v}=[v_1,v_2,\ldots,v_{2N}]^T \in \mathbb{R}^{2N}$, we have
$
\psi(\bm{v})_i = |v_i| , \text{for all }i\in [2N].
$
Note that $\supp(\psi(\bm{A}))$ can provide information about group sparsity of a given matrix $\bm{A}$. 
By simple calculations, one can get the following lemma.
\begin{lemma}
        With  $\bm{\tilde{X}} \in \mathbb{R}^{2N \times M}$ and $\bm{e}=[1,1,\ldots, 1]^T$, we have
        \begin{equation}
                        \psi( \eta_{\theta}(\bm{\tilde{X}}) ) = \psi (\bm{\tilde{X}})  - \theta^k \bm{e},
        \end{equation}
\end{lemma}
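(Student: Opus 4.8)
The plan is to verify the identity one row at a time and then stack the rows into the claimed vector equation. First I would fix an arbitrary row index $n \in [2N]$ and abbreviate $r_n = \|\bm{\tilde{X}}[n,:]\|_2 = \psi(\bm{\tilde{X}})_n$. Applying the definition of the MSTO, the $n$-th row of $\eta_{\theta}(\bm{\tilde{X}})$ is the scalar multiple $\max\{0,\,(r_n-\theta)/r_n\}\,\bm{\tilde{X}}[n,:]$ of the original row $\bm{\tilde{X}}[n,:]$.

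Next I would take the Euclidean norm of that row. Since the shrinkage factor is a nonnegative scalar, it pulls straight out of the norm, giving $\|\eta_{\theta}(\bm{\tilde{X}})[n]\|_2 = \max\{0,\,(r_n-\theta)/r_n\}\cdot r_n = \max\{0,\,r_n-\theta\}$. This quantity is exactly the $n$-th entry of $\psi(\eta_{\theta}(\bm{\tilde{X}}))$. Collecting these entries over $n = 1,\ldots,2N$ then produces the full vector $\psi(\eta_{\theta}(\bm{\tilde{X}}))$ componentwise, which is all the lemma asks for.

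The only point requiring care is the thresholding clip. On every row that survives the operator, i.e. those with $r_n \geq \theta$, the outer max is inactive and $\max\{0,\,r_n-\theta\} = r_n-\theta = \psi(\bm{\tilde{X}})_n - \theta$, which reproduces the claimed identity $\psi(\eta_{\theta}(\bm{\tilde{X}})) = \psi(\bm{\tilde{X}}) - \theta\,\bm{e}$. For rows with $r_n < \theta$ the row is zeroed, so the entry is $0$ rather than the negative value $r_n-\theta$; the stated equality is therefore to be read on $\supp(\psi(\eta_{\theta}(\bm{\tilde{X}})))$, i.e. for the indices kept active, which is precisely the regime exploited in the subsequent convergence analysis. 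I do not anticipate any genuine obstacle here, since the computation is elementary; the main things to get right are confirming that the nonnegative scalar shrinkage factor commutes with the norm and stating the support caveat cleanly so the identity is used consistently downstream.
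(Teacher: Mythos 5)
Your row-wise computation is exactly the ``simple calculations'' the paper alludes to (it supplies no explicit proof), and it is correct: the nonnegative shrinkage factor pulls out of the Euclidean norm to give $\|\eta_\theta(\bm{\tilde{X}})[n]\|_2=\max\{0,\|\bm{\tilde{X}}[n,:]\|_2-\theta\}$. You are also right to flag that the displayed identity is literally valid only on rows with $\|\bm{\tilde{X}}[n,:]\|_2\geq\theta$ (elsewhere the entry clips to $0$ rather than going negative), a caveat the paper leaves implicit but which is consistent with how the lemma is invoked on the active support in the convergence analysis; the $\theta$ versus $\theta^k$ mismatch in the statement is just a typo.
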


By taking one sample $(\bm{\tilde X}^{\natural},\bm{\tilde{Z}}) \in \mathcal{X}(\beta,s,\sigma)$ and letting $\mathcal{I} = \supp( \psi( \bm{\tilde X}^{\natural}) )$, we establish the error bound by two steps: (i) We show that there are no false positive rows in $ \bm{\tilde{X}}^k$ for all $k$.
(ii) Since the no-false-positive property holds, we consider the component on $\mathcal{I}$.

In step (i), we prove the following lemma.
\begin{lemma}[No-false-positive property]\label{le:nofp}
        With all assumptions in Theorem \ref{thm:convergence1} and \textquotedblleft good" parameters $\bm{\Theta}$, we have
        \begin{equation}
                \supp(\psi(\bm{\tilde {X}}^{k})) \subset \supp( \psi( \bm{\tilde X}^{\natural}) ), \quad \forall k.
        \end{equation}
\end{lemma}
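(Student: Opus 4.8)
The plan is to argue by induction on the layer index $k$, exploiting two facts: the initialization $\bm{\tilde X}^{0}=\bm 0$, and the property that the MSTO $\eta_{\theta^{k}}$ annihilates any row of its argument whose $\ell_2$-norm does not exceed $\theta^{k}$ (immediate from the definition $\eta_{\theta}(\bm X)[n]=\max\{0,(\|\bm X[n,:]\|_2-\theta)/\|\bm X[n,:]\|_2\}\,\bm X[n,:]$). Write $\mathcal I=\supp(\psi(\bm{\tilde X}^{\natural}))$ for the active rows of the ground truth. The base case is immediate since $\supp(\psi(\bm{\tilde X}^{0}))=\emptyset\subset\mathcal I$. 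For the inductive step I would assume $\supp(\psi(\bm{\tilde X}^{k}))\subset\mathcal I$ and establish the same containment at layer $k+1$.

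The key computation substitutes the observation model $\bm{\tilde Y}=\bm{\tilde S}\bm{\tilde X}^{\natural}+\bm{\tilde Z}$ into the LISTA-GSCP recursion \eqref{eq:LISTA-CP}, so that the pre-threshold argument is $\bm A:=\bm{\tilde X}^{k}+(\bm W^{k})^{T}\bm{\tilde S}(\bm{\tilde X}^{\natural}-\bm{\tilde X}^{k})+(\bm W^{k})^{T}\bm{\tilde Z}$. Fix any inactive index $i\notin\mathcal I$. By the inductive hypothesis and $i\notin\mathcal I$, both $\bm{\tilde X}^{k}[i,:]$ and $\bm{\tilde X}^{\natural}[i,:]$ vanish, so the leading term $\bm{\tilde X}^{k}[i,:]$ and the diagonal ($j=i$) contribution of $(\bm W^{k})^{T}\bm{\tilde S}(\bm{\tilde X}^{\natural}-\bm{\tilde X}^{k})$ both drop out (the diagonal coefficient is $(\bm W^{k}[:,i])^{T}\bm{\tilde S}[:,i]=1$ but it multiplies a zero row). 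Only off-diagonal terms with $j\neq i$ and $j\in\mathcal I$ survive, since the difference row $(\bm{\tilde X}^{\natural}-\bm{\tilde X}^{k})[j,:]$ is zero for $j\notin\mathcal I$.

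I would then bound the $i$-th row in mixed norm. Using the generalized-coherence estimate $|(\bm W^{k}[:,i])^{T}\bm{\tilde S}[:,j]|\le\tilde\mu$ for $j\neq i$ together with the triangle inequality for $\ell_2$ row norms gives $\|\bm A[i,:]\|_2\le\tilde\mu\sum_{j\in\mathcal I}\|(\bm{\tilde X}^{\natural}-\bm{\tilde X}^{k})[j,:]\|_2+\|((\bm W^{k})^{T}\bm{\tilde Z})[i,:]\|_2\le\tilde\mu\,\|\bm{\tilde X}^{\natural}-\bm{\tilde X}^{k}\|_{2,1}+C_{W}\sigma$, where the noise term is controlled by $\|((\bm W^{k})^{T}\bm{\tilde Z})[i,:]\|_2\le\|\bm W^{k}[:,i]\|_2\,\|\bm{\tilde Z}\|_{F}\le C_{W}\sigma$, using $\|\bm W^{k}[:,i]\|_2\le\|\bm W^{k}\|_{2,1}\le C_{W}$. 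Taking the supremum over admissible $(\bm{\tilde X}^{\natural},\bm{\tilde Z})\in\mathcal X(\beta,s,\sigma)$ and comparing with the calibrated threshold $\theta^{k}=\tilde\mu\sup_{(\bm{\tilde X}^{\natural},\bm{\tilde Z})}\|\bm{\tilde X}^{k}-\bm{\tilde X}^{\natural}\|_{2,1}+\sigma C_{W}$ of Definition \ref{def:good1} yields $\|\bm A[i,:]\|_2\le\theta^{k}$. By the MSTO definition this forces $\bm{\tilde X}^{k+1}[i,:]=\eta_{\theta^{k}}(\bm A)[i]=\bm 0$, hence $i\notin\supp(\psi(\bm{\tilde X}^{k+1}))$, closing the induction.

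The main obstacle is the group/matrix structure: unlike the scalar argument in \cite{chen2018theoretical}, the threshold acts row-wise through the $\ell_2$-norm rather than coordinate-wise, so the cancellation of the diagonal term and the coherence control of the off-diagonal terms must be carried out jointly across all $M$ measurement columns. The correct device is to pass everything through the row-wise $\ell_2$-norm and the mixed $\ell_1/\ell_2$-norm $\|\cdot\|_{2,1}$, which is exactly why $\theta^{k}$ is designed as $\tilde\mu\,\|\bm{\tilde X}^{k}-\bm{\tilde X}^{\natural}\|_{2,1}$ plus the noise allowance $\sigma C_{W}$; verifying that this choice precisely dominates the perturbed $i$-th row is the crux of the lemma, and it is what makes the threshold data-adaptive rather than a fixed constant.
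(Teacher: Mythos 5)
Your proposal is correct and follows essentially the same route as the paper's proof in Appendix B: induction on $k$ with base case $\bm{\tilde X}^{0}=\bm 0$, substitution of $\bm{\tilde Y}=\bm{\tilde S}\bm{\tilde X}^{\natural}+\bm{\tilde Z}$ so that for $i\notin\mathcal I$ only off-diagonal terms over $j\in\mathcal I$ survive, a triangle-inequality bound by $\tilde\mu\|\bm{\tilde X}^{k}-\bm{\tilde X}^{\natural}\|_{2,1}+C_{W}\sigma$, and comparison with the calibrated $\theta^{k}$ to show the MSTO annihilates the row. No substantive differences.
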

\begin{proof}
        Please refer to Appendix \ref{pr:le1}.
\end{proof}
Lemma \ref{le:nofp} shows that there are no false positive entries in $\bm{\tilde {X}}^{k}$.
In another word, if \eqref{g1} in LISTA-GSCP hold, then
\begin{equation}\label{eq:nofalsepositive1}
         \bm{\tilde{X}}^{k}[i,:] =\bm{0}, \quad \forall i \notin \mathcal{I}, ~\forall k.
\end{equation}
This property implies that the recovery error of the component beyond $\mathcal{I}$ turns to be $0$.

In step (ii), we consider the component on $\mathcal{I}$.
For all $i \in \mathcal{I}$, the LISTA-GSCP in \eqref{eq:LISTA-CP} gives
\begin{align}
        &\bm{\tilde X}^{k+1}[i,:]
        = \eta_{\theta^k} \Big( \bm{\tilde{X}}^k[i,:]  - (\bm{W}^k[:,i])^T \bm{\tilde{S}}[:,\mathcal{I}]  \nonumber (  \bm{\tilde{X}}^k[\mathcal{I},:] - \\
         &\bm{\tilde{X}}^{\natural}[\mathcal{I},:]) +  (\bm{W}^k[:,i])^T \bm{\tilde{Z}}   \Big)  \nonumber \\
        & \in \underbrace{  \bm{\tilde{X}}^k[i,:]  - (\bm{W}^k[:,i])^T \bm{\tilde{S}}[:,\mathcal{I}]  
         (  \bm{\tilde{X}}^k[\mathcal{I},:] -   \bm{\tilde{X}}^{\natural}[\mathcal{I},:]) }_{(T1)}   \notag \\
         &+ \underbrace{  (\bm{W}^k[:,i])^T\bm{\tilde{Z}}   }_{(T2)}
          - \underbrace{  \theta^k \partial \Vert\bm{\tilde X}^{k+1}[i,:]\Vert_{2} }_{(T3)},
          \label{eq:subgradient21}
\end{align}
where $\partial \|\cdot \|_{2}$ is given in \eqref{l2norm}.
It can be viewed that \eqref{eq:subgradient21} consists of three parts $(T1)$, $(T2)$, $(T3)$.
Since $(\bm{W}^k[:,i])^T \bm{\tilde{S}}[:,i] = 1$, 
then \eqref{eq:subgradient21} can expressed as 
\begin{align}
        &\bm{\tilde X}^{k+1}[i,:] - \bm{\tilde{X}}^{\natural}[i,:] 
        \in - \sum_{j\in \mathcal{I}, j \ne i} (\bm{W}^k[:,i])^T \bm{\tilde{S}}[:,j] \notag \\
        & \Big(\bm{\tilde{X}}^k[j,:] - \bm{\tilde{X}}^{\natural}[j,:] \Big) 
        + (T2) - (T3). \label{eq:subgradient3}
\end{align}

The definition of $\partial \|\cdot \|_{2}$ in \eqref{l2norm} shows that 
$\| \partial \Vert\bm{\tilde X}^{k+1}[i,:]\Vert_{2}\|_2 \leq 1.$ 
Then, taking the norm on both sides in \eqref{eq:subgradient3}, one can get that for all $i \in \mathcal{I}$,
\begin{align}\label{mid1}
        \big\| \bm{\tilde X}^{k+1}[i,:] - \bm{\tilde{X}}^{\natural}[i,:]  \big\|_2 \overset{\text{(a)}}{\leq}  \sum_{j\in \mathcal{I}, j \ne i} \big|(\bm{W}^k[:,i])^T \bm{\tilde{S}}[:,j]  \big| \cdot \notag \\
         \big\| \bm{\tilde{X}}^k[j,:]  - \bm{\tilde{X}}^{\natural}[j,:] \big\|_2 + \theta^k + \big\Vert (\bm{W}^k[:,i])^T \bm{\tilde{Z}} \big\Vert_2  \notag\\
        \overset{\text{(b)}}{\leq} \tilde{\mu} \sum_{j\in \mathcal{I}, j \ne i} \big\| \bm{\tilde{X}}^k[j,:]  - \bm{\tilde{X}}^{\natural}[j,:]  \big\|_2
         + \theta^k + \big\| \bm{W}^k[:,i] \big\|_2  \big\| \bm{\tilde{Z}} \big\|_F,
\end{align}
where (a) follows from the triangle inequality, (b) arises from the choice of \textquotedblleft good" parameters in \eqref{def:good1} and 
$
\big\| (\bm{W}^k[:,i])^T \bm{\tilde{Z}}  \big\|_2
\leq  \big\| \bm{W}^k[:,i] \big\|_2  \| \bm{\tilde{Z}} \|_F. 
$
It is easy to check that \eqref{eq:nofalsepositive1} implies that $ \big\Vert \bm{\tilde{X}}^k - \bm{\tilde{X}}^{\natural} \big\Vert_{2,1} = \big\Vert \bm{\tilde{X}}^k[\mathcal{I},:] -   \bm{\tilde{X}}^{\natural}[\mathcal{I},:] \big\Vert_{2,1} $ for all $k$. 
Therefore, based on \eqref{mid1}, it follows that 
\begin{align}\label{mid2}
        &\Vert \bm{\tilde{X}}^{k+1} - \bm{\tilde{X}}^{\natural} \Vert_{2,1} = \sum_{i \in \mathcal{I}} \Vert \bm{\tilde{X}}^{k+1}[i,:] -   \bm{\tilde{X}}^{\natural}[i,:] \Vert_{2} \notag \\
        &\leq \tilde{\mu}(|\mathcal{I}| - 1) \big\Vert \bm{\tilde{X}}^k - \bm{\tilde{X}}^{\natural} \big\Vert_{2,1}  + |\mathcal{I}|\theta^k +   \sigma C_W , 
\end{align}
where $C_W = \underset{k\geq 0}{\max}~ \| \bm{W}^k \|_{2,1}$.
The inequality \eqref{mid2} provides the recursive form for consecutive errors of $\Vert \bm{\tilde{X}}^{k+1} - \bm{\tilde{X}}^{\natural} \Vert_{2,1}$ and $\Vert \bm{\tilde{X}}^{k} - \bm{\tilde{X}}^{\natural} \Vert_{2,1}$.
Hence, we establish the recover error bound of mixed-norm for one sample $(\bm{\tilde {X}}^{\natural},\bm{\tilde{Z}}) \in \mathcal{X}(\beta, s, \sigma)$.

\subsection{Error Bound for Whole Data Set}
In this subsection, we give an upper bound of recovery error for the whole training samples.
Note that  $|\mathcal{I}|:= \supp( \psi( \bm{\tilde X}^{\natural}) ) \leq s$ for all $(\bm{\tilde{X}}^{\natural},\bm{\tilde{Z}})\in\mathcal{X}(\beta, s, \sigma)$.
Taking supremum over $(\bm{\tilde{X}}^{\natural},\bm{\tilde{Z}})\in\mathcal{X}(\beta, s, \sigma)$ on both sides of \eqref{mid2}, we have
\begin{align}
        \sup_{(\bm{\tilde X}^{\natural}, \bm{\tilde{Z}})}  \Vert \bm{\tilde{X}}^{k+1} - \bm{\tilde{X}}^{\natural} \Vert_{2,1}   \leq (s - 1) \tilde{\mu} \sup _{(\bm{\tilde X}^{\natural}, \bm{\tilde{Z}})}  \Vert \bm{\tilde{X}}^k - \bm{\tilde{X}}^{\natural} \Vert_{2,1} \notag \\
         + s\theta^k +\sigma  C_W.
\end{align}
Considering the \textquotedblleft good" parameters of 
$\theta^{k}=\tilde{\mu} \sup_{(\bm{\tilde{X}}^{\natural}, \bm{\tilde{Z}}) } \|\bm{\tilde{X}}^{k}-\bm{\tilde{X}}^{\natural}\|_{2,1} +\sigma C_{W} ,$
it follows that
\begin{align*}
        \sup _{(\bm{\tilde X}^{\natural}, \bm{\tilde{Z}})}  \Vert \bm{\tilde{X}}^{k+1} - \bm{\tilde{X}}^{\natural} \Vert_{2,1}   
        \leq (2\tilde{\mu}s - \tilde{\mu})^{k+1} s \beta + \frac{(s+1) C_W}{1+\tilde{\mu}-2\tilde{\mu}s} \sigma,
\end{align*}
provided that $2\tilde{\mu}s - \tilde{\mu} <1$.
By letting $c=-\log (2 \tilde{\mu} s-\tilde{\mu}), C=\frac{(s+1) C_{W}}{1+\tilde{\mu}-2 \tilde{\mu} s}$, we have 
$
        \sup _{(\bm{\tilde X}^{\natural}, \bm{\tilde{Z}})} \Vert \bm{\tilde{X}}^{k+1} - \bm{\tilde{X}}^{\natural} \Vert_{2,1}   \leq s \beta \exp(-c(k+1)) +  C \sigma .
$
The fact that $ \| \bm{X} \|_{F} \leq \| \bm{X} \|_{2,1}$ for any matrix $ \bm{X}$ gives an upper bound of error with respect to Frobenius norm
\begin{align}
        \sup _{(\bm{\tilde X}^{\natural}, \bm{\tilde{Z}})}  \Vert \bm{\tilde{X}}^{k+1} - \bm{\tilde{X}}^{\natural} \Vert_{F}   &\leq \sup _{(\bm{\tilde X}^{\natural}, \bm{\tilde{Z}})} \Vert \bm{\tilde{X}}^{k+1} - \bm{\tilde{X}}^{\natural} \Vert_{2,1} \notag \\
        &\leq s \beta \exp(-c(k+1)) + \sigma C .
\end{align}
As long as  $s\le(1+1 / \tilde{\mu}) / 2$ and $ c=-\log (2 \tilde{\mu} s-\tilde{\mu})>0$ hold, the error bound holds uniformly
for all $(\bm{\tilde X}^{\natural}, \bm{\tilde{Z}}) \in \mathcal{X}(\beta, s, \sigma)$.
Then for $k$-layer network, we have
\begin{align}
\sup _{(\bm{\tilde X}^{\natural}, \bm{\tilde{Z}})}  \Vert \bm{\tilde{X}}^{k} - \bm{\tilde{X}}^{\natural} \Vert_{F}  \leq s \beta \exp(-ck) + \sigma C .
\end{align}
Therefore, we complete the proof of Theorem \ref{thm:convergence1}.

\subsection{Convergence Rate of ALISTA-GS}
In this subsection, we first give the definition of the \textquotedblleft good" parameters to be learned in the ALISTA-GS network.
We then verify the convergence of ALISTA-GS for noiseless case.
For noisy case, the analysis can be referred to that of Theorem $1$.
\begin{definition}\label{def:good2}
 $\bm {\Theta} = \{  \theta^k , \gamma^k   \}_{k=0}^{\infty} $ are called \textquotedblleft good" parameters for all  $(\bm{\tilde X}^{\natural}, \bm{\tilde{Z}}) \in \mathcal{X}(\beta,s,0)$ (noiseless case) with the weight matrix is pre-computed in ALISTA-GS if they satisfy that for each $k\in \mathbb{N}$
        \begin{equation}\label{g3}
        \bm{W} \in \mathcal{W}(\bm{\tilde{S}}), \quad
        \theta^{k}=\tilde{\mu} \gamma^k \sup _{(\bm{\tilde X}^{\natural}, \bm{\tilde{Z}})} \|\bm{\tilde{X}}^{k}-\bm{\tilde{X}}^{\natural}\|_{2,1}.
        \end{equation}
\end{definition}

\begin{theorem}[Convergence rate of ALISTA-GS]\label{thm:convergence2}
        Given $\{ \theta^k, \gamma^k \}_{k=0}^{\infty} $ and $\bm{W} \in \mathcal{W}(\bm{\tilde{S}})$,
        let $\{\bm{\tilde  X}^{k}\}_{k=1}^{\infty}$
        be generated by ALISTA-GS in \eqref{eq:ALISTA} with an input $\bm{\tilde{Y}}$ observed by \eqref{mmv} and initial point $\bm{\tilde  X}^{0}=\bm{0}$. If $s\le(1+1 / \tilde{\mu}) / 2$ for all $(\bm{\tilde  X}^{\natural}, \bm{\tilde{Z}}) \in \mathcal{X}(\beta, s, 0),$ and
        \begin{align}
                &\gamma^k \in \Big(0, \frac{2}{1+2\tilde{\mu} s - \tilde{\mu} }\Big), \notag \\
                &\theta^{k}=  \tilde{\mu} \gamma^k \sup_{  (\bm{\tilde  X}^{\natural}, \bm{\tilde{Z}})     }   \| \bm{\tilde{X}}^{k}-\bm{\tilde{X}}^{\natural} \|_{2,1} ,\forall k \in \mathbb{N} ,  \label{eq:thm3-condition2}
        \end{align}
        then we have
        \begin{align}
        \Vert \bm{\tilde{X}}^{k+1} - \bm{\tilde{X}}^{\natural} \Vert_{F}  
        \leq s \beta \exp\Big(-\sum_{\tau=0}^{k}c^{\tau} \Big),
        \end{align}
        where $c^{\tau} = -\log\big( \gamma^\tau(2 \tilde{\mu} s-\tilde{\mu}) + |1 - \gamma^\tau|\big)$ is a positive constant.
\end{theorem}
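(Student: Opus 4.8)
The plan is to follow the same three-stage template used for Theorem~\ref{thm:convergence1} (no-false-positive property, then a per-sample recursion on the support, then a supremum over the data set), adapting it to the noiseless regime $\sigma=0$ and to the simplified weight $\gamma^{k}\bm{W}^{T}$ with a single fixed $\bm{W}\in\mathcal{W}(\bm{\tilde{S}})$. First I would re-derive the no-false-positive property of Lemma~\ref{le:nofp} for the ALISTA-GS recursion \eqref{eq:ALISTA}. Because the update retains the MSTO thresholding structure and $\bm{W}\in\mathcal{W}(\bm{\tilde{S}})$ still obeys $(\bm{W}[:,i])^{T}\bm{\tilde{S}}[:,i]=1$, the identical subgradient argument gives $\supp(\psi(\bm{\tilde{X}}^{k}))\subset\mathcal{I}:=\supp(\psi(\bm{\tilde{X}}^{\natural}))$ for all $k$, so the entire error is supported on $\mathcal{I}$ and $\|\bm{\tilde{X}}^{k}-\bm{\tilde{X}}^{\natural}\|_{2,1}=\sum_{i\in\mathcal{I}}\|\bm{\tilde{X}}^{k}[i,:]-\bm{\tilde{X}}^{\natural}[i,:]\|_{2}$.

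Next, for each $i\in\mathcal{I}$ I would expand the argument of $\eta_{\theta^{k}}$ using $\bm{\tilde{Y}}=\bm{\tilde{S}}\bm{\tilde{X}}^{\natural}$ and isolate the diagonal index $j=i$. The essential difference from Theorem~\ref{thm:convergence1} is that the scalar $\gamma^{k}$ spoils the exact cancellation that occurred there: since $(\bm{W}[:,i])^{T}\bm{\tilde{S}}[:,i]=1$, the self-interaction now leaves a factor $(1-\gamma^{k})$ instead of $0$. Subtracting $\bm{\tilde{X}}^{\natural}[i,:]$, using the subgradient inclusion together with the unit-norm bound on $\partial\|\cdot\|_{2}$, and taking the row-wise $\ell_{2}$ norm gives
\begin{align*}
\big\|\bm{\tilde{X}}^{k+1}[i,:]-\bm{\tilde{X}}^{\natural}[i,:]\big\|_{2}
&\le |1-\gamma^{k}|\,\big\|\bm{\tilde{X}}^{k}[i,:]-\bm{\tilde{X}}^{\natural}[i,:]\big\|_{2}\\
&\quad +\gamma^{k}\tilde{\mu}\!\!\sum_{j\in\mathcal{I},\,j\ne i}\!\!\big\|\bm{\tilde{X}}^{k}[j,:]-\bm{\tilde{X}}^{\natural}[j,:]\big\|_{2}+\theta^{k},
\end{align*}
where the off-diagonal terms are controlled by $|(\bm{W}[:,i])^{T}\bm{\tilde{S}}[:,j]|\le\tilde{\mu}$ from $\bm{W}\in\mathcal{W}(\bm{\tilde{S}})$.

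Summing over $i\in\mathcal{I}$, invoking the no-false-positive property to recombine the sums into $\|\cdot\|_{2,1}$ norms, bounding $|\mathcal{I}|\le s$, and taking the supremum over $(\bm{\tilde{X}}^{\natural},\bm{\tilde{Z}})\in\mathcal{X}(\beta,s,0)$ yields
\begin{align*}
\sup_{(\bm{\tilde{X}}^{\natural},\bm{\tilde{Z}})}\|\bm{\tilde{X}}^{k+1}-\bm{\tilde{X}}^{\natural}\|_{2,1}
&\le\big(|1-\gamma^{k}|+\gamma^{k}\tilde{\mu}(s-1)\big)\\
&\quad\times\!\!\sup_{(\bm{\tilde{X}}^{\natural},\bm{\tilde{Z}})}\|\bm{\tilde{X}}^{k}-\bm{\tilde{X}}^{\natural}\|_{2,1}+s\theta^{k}.
\end{align*}
Plugging in the ``good'' choice $\theta^{k}=\tilde{\mu}\gamma^{k}\sup\|\bm{\tilde{X}}^{k}-\bm{\tilde{X}}^{\natural}\|_{2,1}$ from \eqref{g3} makes the $s\theta^{k}$ term add $s\tilde{\mu}\gamma^{k}$ to the coefficient, collapsing it to $|1-\gamma^{k}|+\gamma^{k}(2\tilde{\mu}s-\tilde{\mu})=\exp(-c^{k})$, i.e.\ the one-step contraction $\sup\|\bm{\tilde{X}}^{k+1}-\bm{\tilde{X}}^{\natural}\|_{2,1}\le\exp(-c^{k})\sup\|\bm{\tilde{X}}^{k}-\bm{\tilde{X}}^{\natural}\|_{2,1}$. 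Iterating this from $k$ down to $0$, using $\bm{\tilde{X}}^{0}=\bm{0}$ so that $\|\bm{\tilde{X}}^{0}-\bm{\tilde{X}}^{\natural}\|_{2,1}\le|\mathcal{I}|\beta\le s\beta$, and finally $\|\cdot\|_{F}\le\|\cdot\|_{2,1}$ produces the claimed bound.

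The main obstacle is establishing that each factor is genuinely contractive, i.e.\ $c^{\tau}>0$, equivalently $\gamma^{\tau}(2\tilde{\mu}s-\tilde{\mu})+|1-\gamma^{\tau}|<1$. I would split on the sign of $1-\gamma^{\tau}$: for $0<\gamma^{\tau}\le1$ the inequality reduces to $\gamma^{\tau}(2\tilde{\mu}s-\tilde{\mu}-1)<0$, which holds because the hypothesis $s\le(1+1/\tilde{\mu})/2$ forces $2\tilde{\mu}s-\tilde{\mu}<1$; for $\gamma^{\tau}>1$ it reduces to $\gamma^{\tau}(1+2\tilde{\mu}s-\tilde{\mu})<2$, which is exactly the upper endpoint of the admissible interval $\gamma^{k}\in(0,2/(1+2\tilde{\mu}s-\tilde{\mu}))$ imposed in \eqref{eq:thm3-condition2}. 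This symmetric window about $\gamma=1$ is the delicate point of the argument, and it is precisely what the prescribed step-size range encodes; it is also where the learned step size $\gamma^{k}$ can sharpen the per-layer rate beyond the fixed-step LISTA-GSCP of Theorem~\ref{thm:convergence1}.
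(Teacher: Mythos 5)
Your proposal is correct and follows essentially the same route as the paper's own proof in Appendix C: the same no-false-positive induction adapted to the $\gamma^{k}\bm{W}^{T}$ update, the same per-row recursion with the $(1-\gamma^{k})$ self-interaction term and $\tilde{\mu}$-bounded cross terms, the same collapse of the coefficient to $\gamma^{k}(2\tilde{\mu}s-\tilde{\mu})+|1-\gamma^{k}|$ after substituting the ``good'' $\theta^{k}$, and the same two-case verification that $c^{\tau}>0$ on the prescribed step-size interval. No gaps.
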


\begin{proof}
        Please refer to Appendix \ref{pr:th3}.
\end{proof}
\begin{remark}
        Optimally, if the factor $c^{\tau}$ takes the maximum at $\gamma^\tau = 1 $, i.e., $c^{\tau}  \equiv -\log( 2 \tilde{\mu} s-\tilde{\mu})$, then ALISTA-GS enjoys a linear convergence rate.
\end{remark}

\section{Numerical Results}\label{sec:num}
In this section, we present the simulation results for the proposed three unrolled neural network structures for grant-free massive access in IoT networks.
We first introduce the settings of the simulations and performance metrics and then present the simulation results to confirm the
main theorems including the weight coupling structure and the convergence rate.
Finally, we compare the recovery performance of the proposed methods with other popular CS-based methods.
\begin{figure}[htb!]
        \centering
        \subfigure[Weight $ \bm{W}_{2}^{k}-(\bm I-\bm{W}_{1}^{k} \bm{\tilde{S}}) \rightarrow \bm{0}$.]{
                \includegraphics[width=0.40\linewidth]{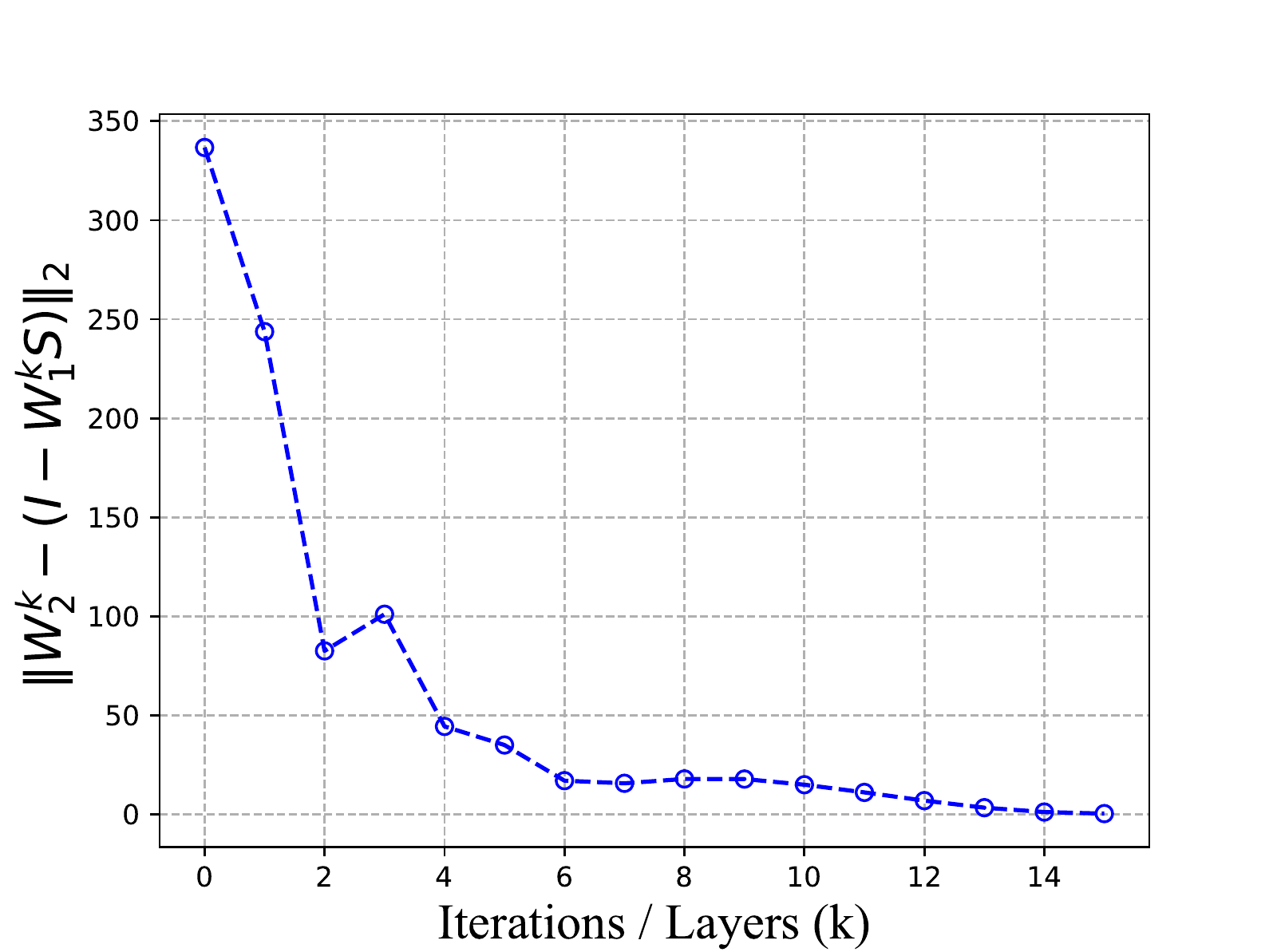}
                \label{fig:tha}
        }
        \subfigure[The threshold $ \theta^{k} \rightarrow 0$.]{
                \includegraphics[width=0.40\linewidth]{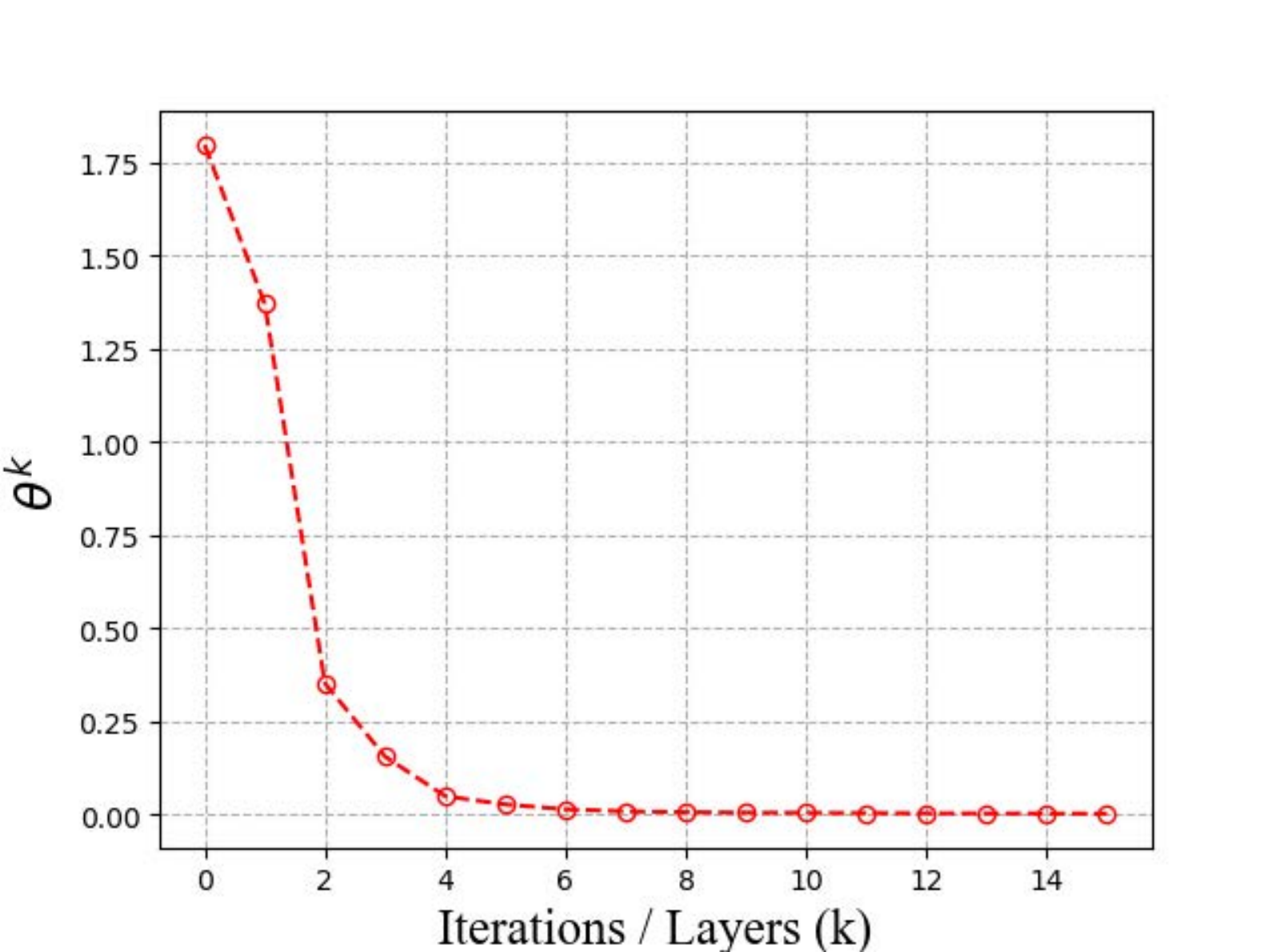}
                \label{fig:thb}}
        \caption{Validation of Theorem \ref{thm:main1}. LISTA-GS has weight coupling structure.} \label{fig:v1}
\end{figure}

\subsection{Simulation Settings and Performance Metrics}
In simulations, 
the channels are assumed to suffer from independent Rayleigh fading, i.e., $\bm{H} \sim \mathcal{C}\mathcal{N}(\bm{0}, \bm{I}) $. 
The preamble signature matrix $\bm{S}$ is fixed with each of its columns being normalized and the noise matrix $\bm{Z}$ follows the Gaussian distribution with zero mean and variance $\sigma^2$.
In addition, each entry of the activity sequence $ \{a_1,...,a_N\}$ 
is a random variable which follows the Bernoulli distribution with mean $0.1$, i.e., $\mathbb{P}(a_n = 1) = 0.1$ and $\mathbb{P}(a_n = 0) = 0.9$, $\forall \, n \in [N] $. 
The transmit signal-to-noise ratio (SNR) of the system is defined as
\begin{equation}
        \text{SNR} = \frac{\E[\| \bm{SX} \|_{F}^2]}{\E[\| \bm{Z} \|_{F}^2]} .
\end{equation}

By transforming all these complex-valued matrices into real-valued matrices according to (\ref{realmmv}), we obtain the training data set $  \{ \bm{\tilde{X}}_i^{\natural}, \bm{\tilde{Y}}_i\}_{i=1}^P$.
In the training stage, we choose $K = 12 $ layers unless otherwise stated for all the unrolled models in the simulations.
The training set contains $ P=64$ different samples in the training stage. 
The learning rate $\alpha_0$ is set to be $ 5 \times 10^{-4}$ and $\alpha_1 = 0.2\alpha_0$. 
In the validation and testing stage, $ 128$ samples are generated to test the trained models (drawn independent of the training set, but from the same distribution). 

We compare our proposed unrolled networks, i.e., LISTA-GS, LISTA-GSCP and ALISTA-GS, with the other three popular CS-based algorithms for group-sparse matrix estimation problem:
\begin{itemize}
        \item \textbf{ISTA-GS} \cite{yuan2006model}: The vanilla ISTA for solving multiple measurement vector CS-based problem.
         The regularization parameter $\lambda$ is set as $0.1$.
         \item \textbf{Nesterov's method} \cite{liu2009multi}: A first-order method solves the equivalent smooth convex reformulations of \eqref{groupLASSO}.
         The regularization parameter $\lambda$ is set as $0.1$.
         \item \textbf{AMP-MMV} \cite{ziniel2012efficient, chen2018sparse}: The AMP developed Bayesian algorithm for solving the multiple measurement vector CS-based problem. 
         \item \textbf{AMP MMSE-denoiser} \cite{liu2018massive}: The vector AMP algorithm with minimum mean-squared error (MMSE) denoiser for user activity detection and channel estimation.
\end{itemize}

We adopt the normalized mean square error (NMSE) to evaluate the performance in recovering the real-valued $ \bm{\tilde{X}}$, defined as 
\begin{equation}
\text{NMSE}(\bm{\tilde{X}}, \bm{\tilde{X}}^{\natural}) = 10 \log_{10} \left(\frac{\mathbb{E}\Vert \bm{\tilde{X}} - \bm{\tilde{X}}^{\natural} \Vert_{F}^2}{\mathbb{E}\Vert \bm{\tilde{X}}^{\natural} \Vert_{F}^2}\right),
\end{equation}
where $\bm{\tilde{X}}^{\natural}$ represents the ground truth and $ \bm{\tilde{X}}$ is the estimated value.

\subsection{Validation of Theorems}
\begin{figure*}[t]
        \centering
        \begin{minipage}{.46\textwidth}
                \centering
                \includegraphics[width=1.0\columnwidth]{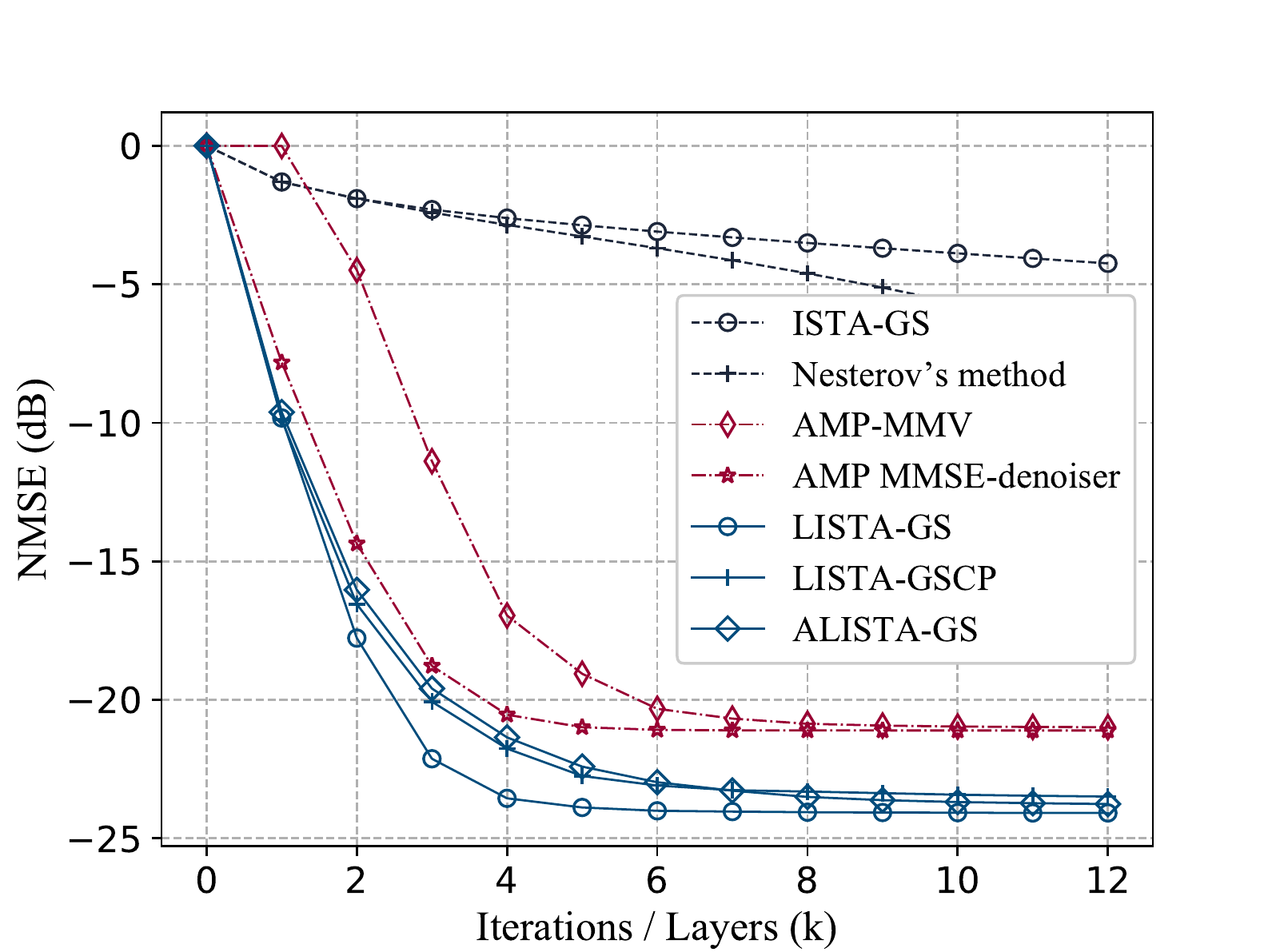}
                \caption{NMSE versus iterations with SNR $=15~$dB.}\label{fig:1}
        \end{minipage}
        \begin{minipage}{.46\textwidth}
                \centering
                \includegraphics[width=1.0\columnwidth]{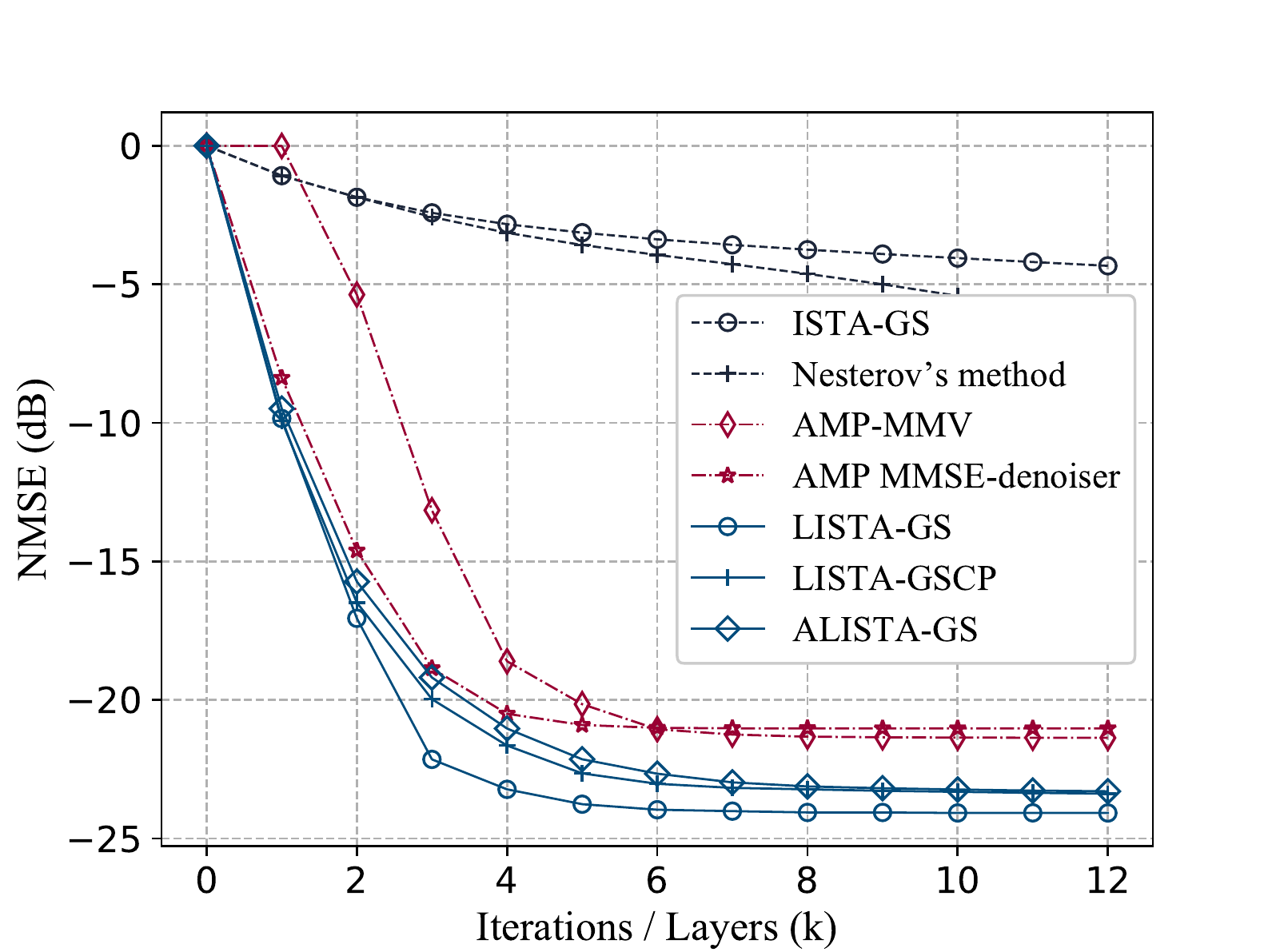}
                \caption{NMSE of unrolled networks and other methods when condition number $ \kappa = 2$ and SNR $=15~$dB. }\label{fig:k2}
        \end{minipage}

    \begin{minipage}{.46\textwidth}
        \centering
        \includegraphics[width=1.0\columnwidth]{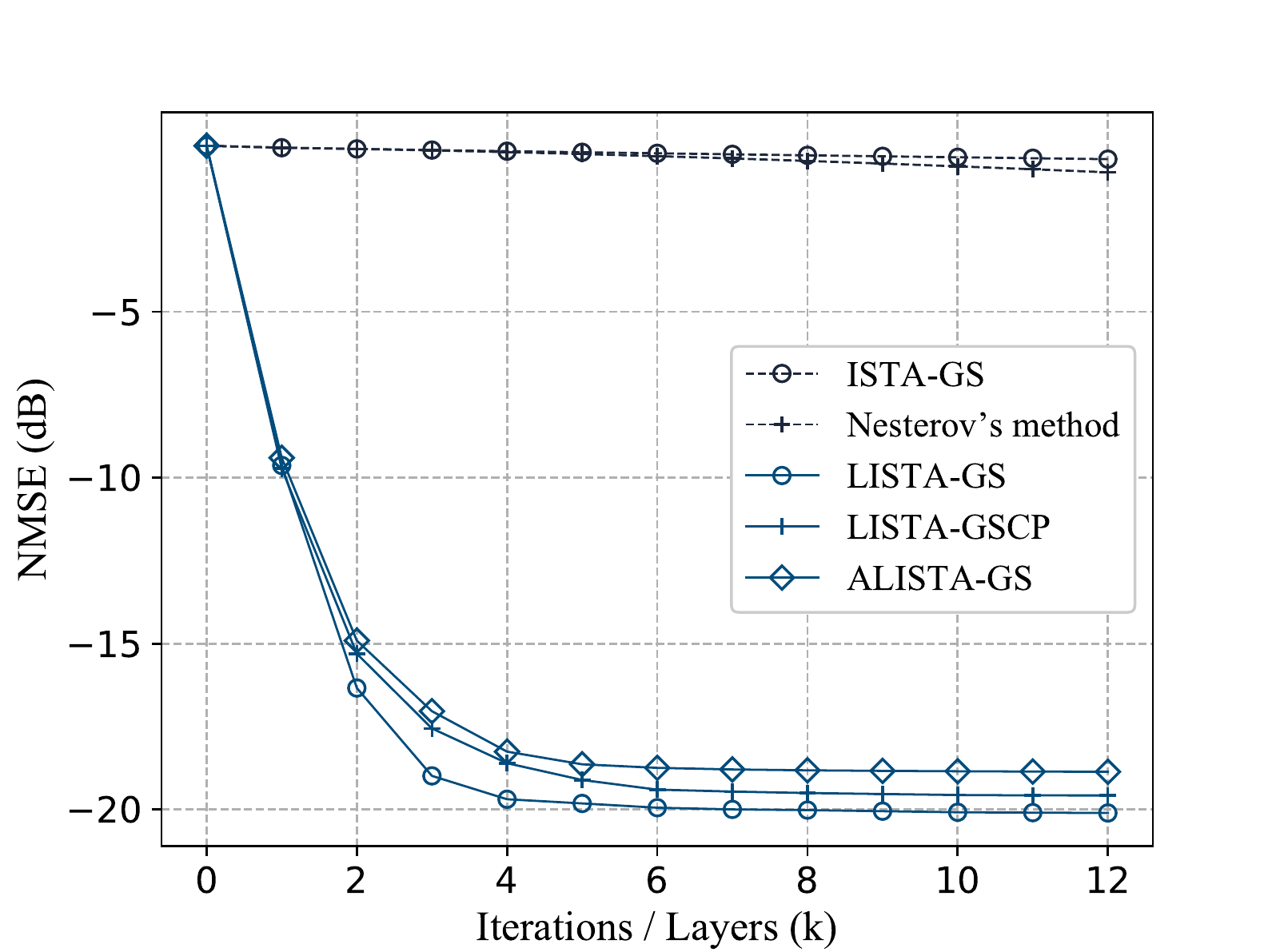}
        \caption{NMSE of urolled networks and other methods when condition number $ \kappa = 15$ and SNR $=15~$dB. AMP-MMV and AMP MMSE-denoiser fail in this case.}\label{fig:k50}
    \end{minipage}
    \begin{minipage}{.46\textwidth}
        \centering
        \includegraphics[width=1.0\columnwidth]{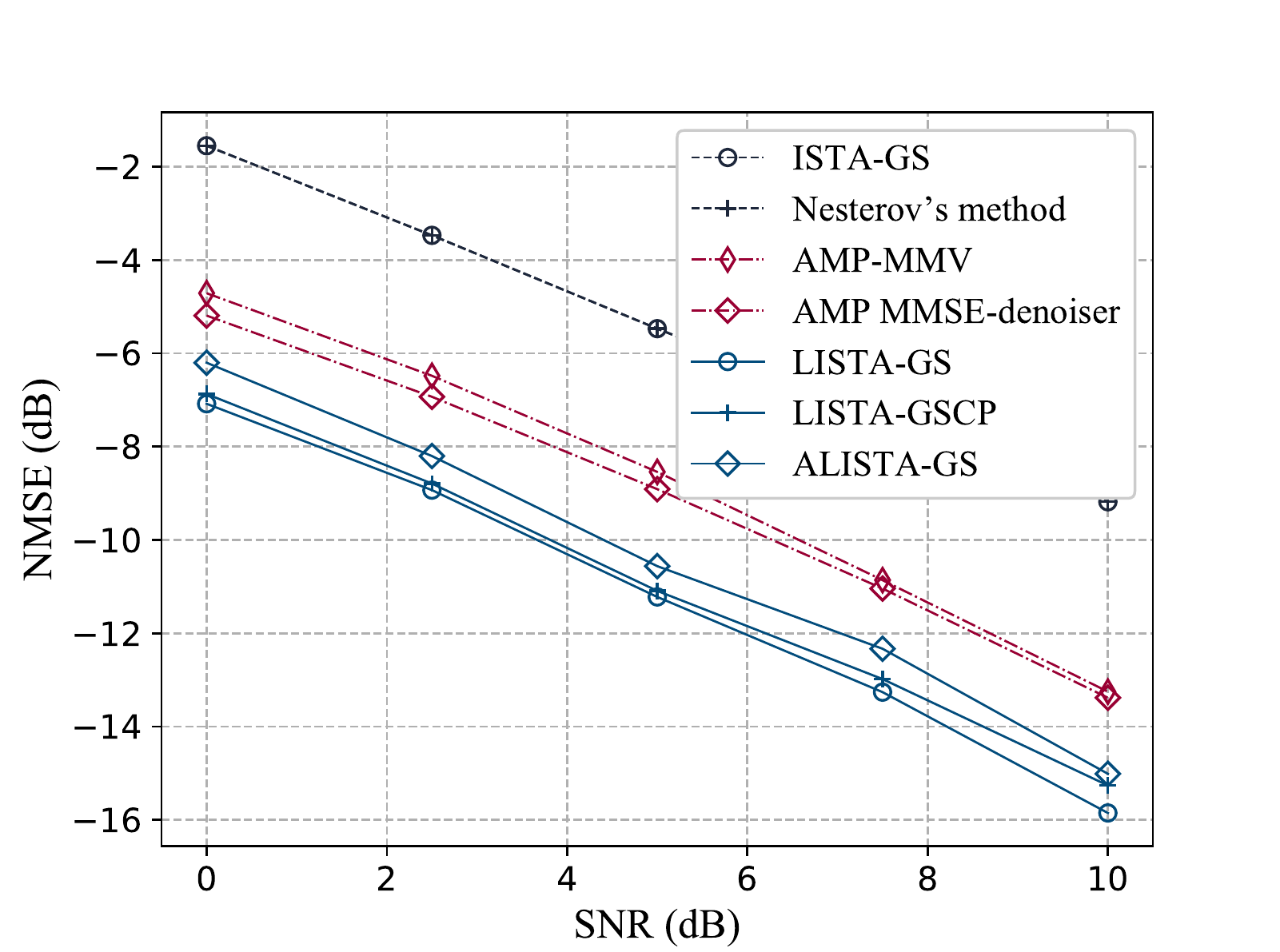}
        \caption{NMSE versus SNR when preamble signature $\bm{S}$ is complex Gaussian matrix in noisy case.}\label{fig:2}
    \end{minipage}
\end{figure*}

In this subsection, we conduct simulations to validate the developed Theorems.
We generate the preamble signature matrix according to the complex Gaussian distribution unless otherwise stated, i.e., $ \bm{S} \sim \mathcal{C}\mathcal{N}(\bm{0}, \bm{I})$.
We set the length of the signature sequence, the total number of devices, and the number of antennas at the BS, i.e., $L,N$ and $M$, to $100$, $200$, and $30$, respectively. 

\textbf{Validation of Theorem \ref{thm:main1}}.
Theorem 1 endorses the empirical results shown in Figs. \ref{fig:tha} and \ref{fig:thb}.
In Fig. \ref{fig:v1}, the value of $ \| \bm{W}_{2}^{k}-(\bm I-\bm{W}_{1}^{k} \bm{\tilde{S}}) \|_2 $ and $ \theta^k$ in LISTA-GS are reported.
We observe that as $k$ increases, the value of $ \| \bm{W}_{2}^{k}-(\bm I-\bm{W}_{1}^{k} \bm{\tilde{S}}) \|_2 $ and $ \theta^k$ approach to $0$. 
The simulations clearly validate Theorem 1: $ \bm{W}_{2}^{k}-(\bm I-\bm{W}_{1}^{k} \bm{\tilde{S}}) \rightarrow \bm{0}$ and $\theta^k \rightarrow 0 $, as $ k \rightarrow \infty$.

\textbf{Validations of Theorems \ref{thm:convergence1} and \ref{thm:convergence2}}.
We examine the convergence rates of the proposed unrolled networks.
Fig. \ref{fig:1} shows the NMSE of the proposed unrolled networks and other methods over iterations in a noisy scenario when SNR $=15~$dB. 
In Fig. \ref{fig:1}, we show that the proposed networks (almost) converge linearly, which validate Theorem 2 and 3.
Furthermore, the LISTA-GS converges fastest, but with the most number of trainable parameters.
Besides, we observe that the proposed unrolled networks outperform other baseline algorithms in terms of NMSE.
\begin{table}[!t]
        \centering
        \footnotesize
        \caption{Training time and estimation performance of the proposed unrolled networks.}
        \begin{tabular}{c|c|c|c}
                \toprule
                Networks&LISTA-GS&LISTA-GSCP&ALISTA-GS\\
                \midrule  
                Training time (min)&160.13&105.31&87.09\\
                \midrule
                NMSE (dB)&-24.08&-23.38&-23.30\\
                \bottomrule
        \end{tabular}
        \label{ta:3}
\end{table}
\begin{table*}[!t]
		\centering
        \footnotesize
        \caption{Average running time of various methods of $12$-layer iterations per sample.}
        \begin{tabular}{c|c|c|c|c|c|c}
                \toprule
                LISTA-GS&LISTA-GSCP&ALISTA-GS&ISTA-GS&Nesterov's method&AMP-MMV&AMP MMSE-denoiser\\
                \midrule  
                0.0078s&0.0076s&0.0076s&0.0079s&0.0087s&0.2089s&0.0731s\\
                \bottomrule
        \end{tabular}
        \label{ta:2}
\end{table*}
\begin{figure*}[t]
        \centering
        \begin{minipage}{.46\textwidth}
                \centering
                \includegraphics[width=1.0\columnwidth]{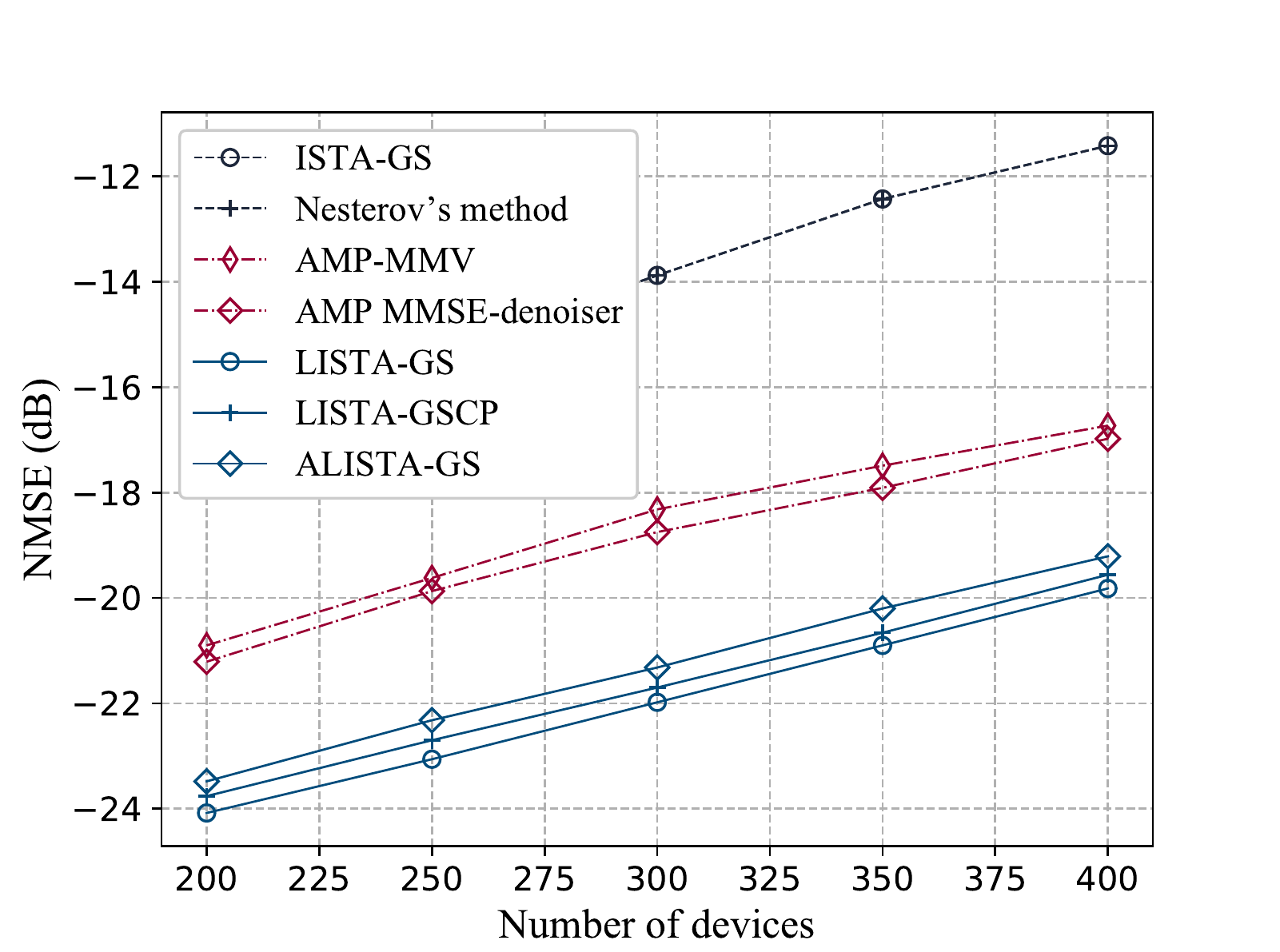}
                \caption{NMSE versus number of devices.}\label{fig:reply1}
        \end{minipage}
        \begin{minipage}{.46\textwidth}
                \centering
                \includegraphics[width=1.0\columnwidth]{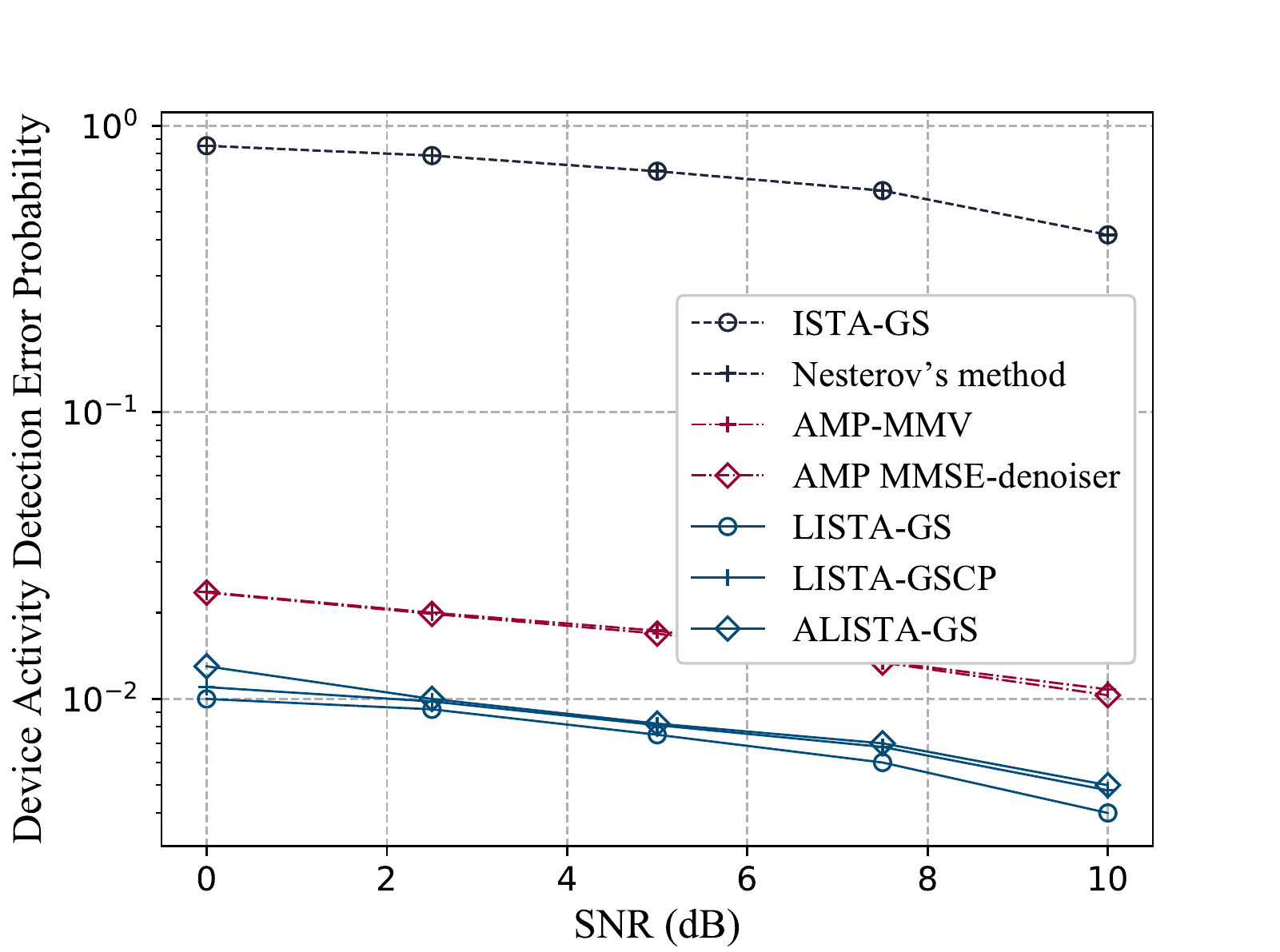}
                \caption{Device activity detection error probability versus SNR.}\label{fig:reply2}
        \end{minipage}
        
        \begin{minipage}{.46\textwidth}
                \centering
                \includegraphics[width=1.0\columnwidth]{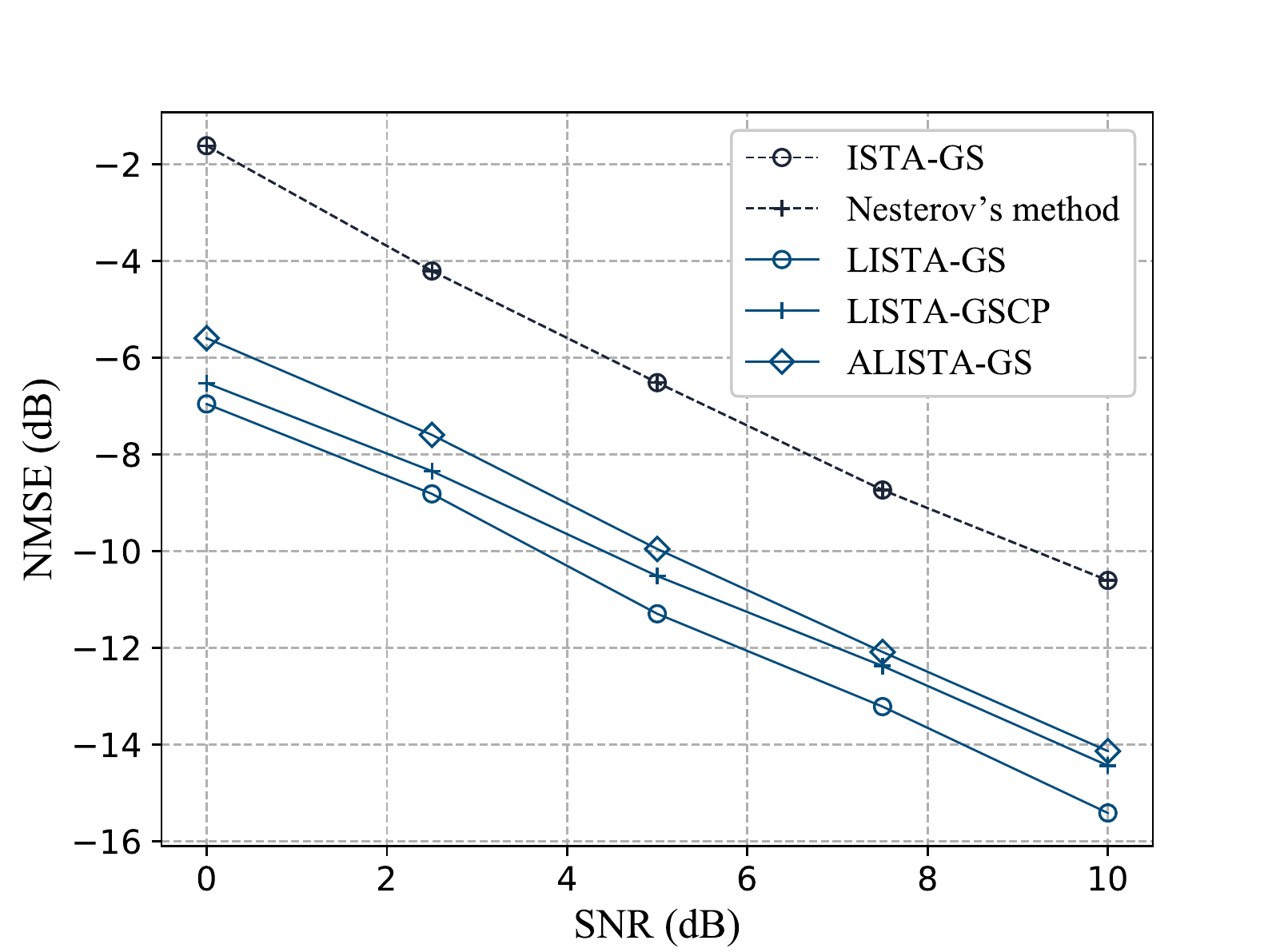}
                \caption{NMSE versus SNR when preamble signature $\bm{S}$ is binary matrix in noisy case. AMP-MMV and AMP MMSE-denoiser fail in this case. }\label{fig:binary}
        \end{minipage}
        \begin{minipage}{.46\textwidth}
                \centering
                \includegraphics[width=1.0\columnwidth]{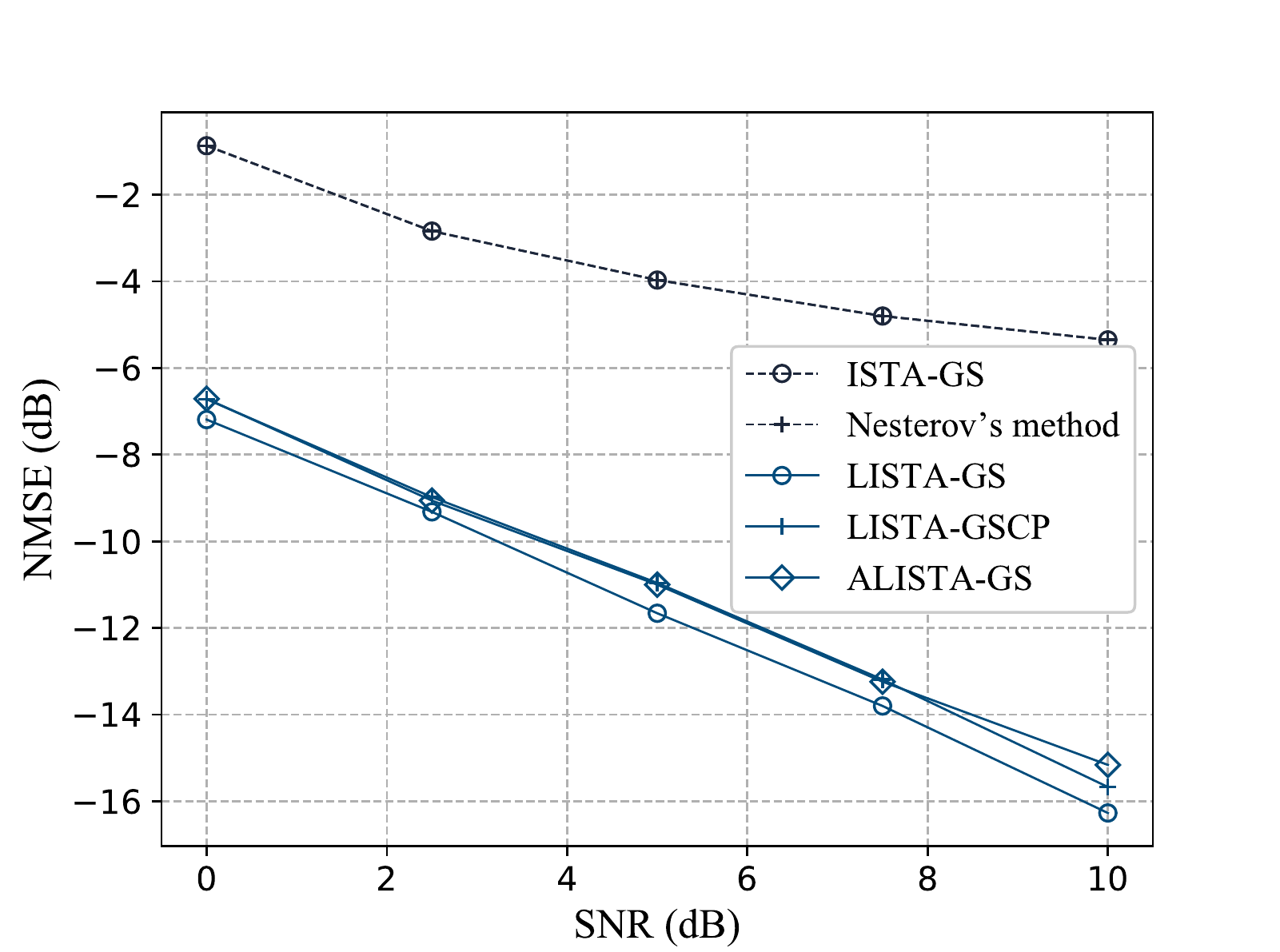}
                \caption{NMSE versus SNR when preamble signature $\bm{S}$ is Zadoff-Chu sequence matrix in noisy case. AMP-MMV and AMP MMSE-denoiser fail in this case.}\label{fig:zc}
        \end{minipage}
\end{figure*}
\textbf{Training Time Comparison.} We compare the training time of the proposed unrolled networks.
We train all the networks of $12$ layers with $64$ training samples in noisy scenario when SNR = $15$ dB on GeForce GTX 1080.
Table \ref{ta:3} shows that the ALISTA-GS is not only faster to train but performs as well as LISTA-GS and LISTA-GSCP.
With less trainable parameters, the training process turns to be faster.

\textbf{Computation Time Comparison.}
We compare the running time of test stage for different methods.
Once the proposed unrolled networks are trained, we can use them to recover many new sample of different channels. 
We run all the methods of $12$ iterations on Intel(R) Core(TM) i7-8650U CPU @ 2.11 GHz and average over $100$ test samples, which is shown in Table \ref{ta:2}.
The running time of the proposed unrolled networks is very close to the ISTA-GS and Nesterov's method, which validate the time complexity analysis in Section III-C.
In addition, the proposed unrolled networks run faster than AMP-based algorithms.
This benefits from the low computational complexity per iteration of ISTA-GS.

\textbf{Convergence Performance with Ill-Conditioned Preamble Signature.}
We consider the simulations when preamble signature matrix $\bm{S}$ is an ill-conditioned matrix, which is defined as a matrix with a large condition number $\kappa$. We consider the fixed signature matrix $\bm{S}$ with condition numbers $ \kappa = 2$ and $ 15$ in this setting.
To obtain the signature matrix $\bm{S}$ of condition number $ \kappa = 2$ and $ 15$, we first sample a matrix $\bm{A} \in \C^{L \times N}$, i.e., $\bm{A} \sim \mathcal{C}\mathcal{N}(\bm{0}, \bm{I})$.
Then, we decompose $\bm{A}=\bm{U\varSigma V^{\natural}}$ by singular value decomposition and replace $ \bm{\varSigma}$ by a new $ \bm{\varSigma^{\natural}}$ that satisfy the above conditions.

Fig. \ref{fig:k2} and Fig. \ref{fig:k50} show the simulation results of NMSE when signature matrix $\bm{S}$ of the condition number is $ \kappa = 2$ and $ \kappa = 15$ with SNR $=15$ dB, respectively.
The baseline AMP-MMV and AMP MMSE-denoiser fail when $ \kappa = 15$ in simulations.
Comparing with the case $ \kappa = 2$ and $ \kappa = 15$, the proposed unrolled networks remain stable but the outputs of ISTA-GS and Nesterov's method diverge when $ \kappa = 15$.
This results show that the proposed unrolled networks are robust to ill-conditioned preamble signature matrix.


\subsection{Performance of Proposed Unrolled Neural Networks}
In this subsection, we show the performance in terms of NMSE, and compare the proposed methods with other classic CS-based methods.
By varying the SNR from $0$ to $10$dB, all the algorithms under consideration reach a stable solution in the simulations, and we set $K=12$.
The smaller NMSE value means the better recovery performance of JADCE.
In these cases, we set $L,N$ and $M$ to $90$, $300$, and $100$, which is suitable for grant-free massive access.

\textbf{Gaussian Matrix as Preamble Signature}.
We conduct simulations for the case in which preamble signature matrix $\bm{S}$ is a complex Gaussian matrix, i.e., $ \bm{S} \sim \mathcal{C}\mathcal{N}(\bm{0}, \bm{I})$.
Fig. \ref{fig:2} shows the impact of SNR on the NMSE of the proposed unrolled networks and baseline algorithms. 
First, the performance of NMSE decreases as SNR grows, which implies that the JADCE performance becomes better as SNR increases.
Second, the proposed network structures achieve a much lower NMSE than other CS-based methods for different values of SNR. 
In addition, the LISTA-GS outperforms the LISTA-GSCP and ALISTA-GS.
This indicates that the more trainable parameters, the better NMSE performance.

Fig. \ref{fig:reply1} demonstrates the NMSE performance versus number of devices of the proposed unrolled networks and baseline algorithms with $L = 90$ and SNR = $15$ dB.
As a result, the NMSE increases as the number of devices increase for all the methods, while the proposed unrolled networks outperform other baseline algorithms.
Fig. \ref{fig:reply2} shows that the proposed unrolled networks have the less device activity detection error probability compared with other baseline methods, which also verify the superior of the proposed unrolled networks.

\textbf{Binary Matrix as Preamble Signature}.
We conduct simulations for the case in which the preamble sequence matrix $\bm{S}$ is binary sequence matrix, i.e., $\bm{S} \in \{ \pm 1 \}^{L \times N}$.
Each entry of $\bm{S}$ is selected uniformly at random on $\{ \pm 1 \}$, which is closely related to Code Division Multiple Access (CDMA) communication systems \cite{kabashima2003cdma}.
Fig. \ref{fig:binary} shows the NMSE performance of the proposed unrolled networks and baseline methods.
As a result, the proposed networks remain stable but AMP-MMV and AMP MMSE-denoiser fail to solve the JADCE problem when preamble sequence matrix $\bm{S}$ is binary.
The proposed unrolled networks outperform baseline algorithms more than $ 3$dB on NMSE for different values of SNR.
Obviously, the proposed unrolled networks achieve significantly better NMSE performance compared with CS-based methods.
This result demonstrates the robustness of our proposed methods for the non-Gaussian matrices case.

\textbf{Zadoff-Chu Sequences as Preamble Signature.}
In this case, we carry out simulations when preamble signature $\bm{S}$ are composed of Zadoff-Chu sequences\cite{ding2019comparison}, which are widely used in practical situations, i.e., 4G LTE systems.
Fig. \ref{fig:zc} shows the NMSE performance of the proposed unrolled networks and other baseline CS-based methods.
The proposed networks remain stable but AMP-based algorithms fail to solve the JADCE when $\bm{S}$ is Zadoff-Chu sequence matrix.
From this figure, we observe that the proposed unrolled networks achieve a much better NMSE performance compared with the baseline methods.
Moreover, as the SNR increasing, the proposed approaches perform better than baseline methods.
Among the proposed structures, the LISTA-GS achieves the best performance and LISTA-GSCP shares a similar performance to that of ALISTA-GS.
This result also shows the robustness of our proposed methods for practical situations.

In summary, the simulations demonstrate
the effectiveness of the proposed unrolled networks in the
following aspects:
\begin{itemize}
        \item The proposed unrolled networks converge faster than the robust algorithms such as ISTA-GS and Nesterov's method, yielding much lower computational complexity.
        \item Simulations of different preambles show that the proposed unrolled networks are more robust compared with the computationally efficient algorithms such as AMP-based algorithms.
        \item The proposed unrolled networks achieve better performance for JADCE comparing with the baseline CS-based algorithms.
\end{itemize}

\section{Conclusions}\label{sec:conclu}
In this paper, we proposed a novel unrolled deep neural network framework enjoying linear convergence rate, low computational complexity and high robustness to solve the JADCE problem in grant-free massive access for IoT networks.
We introduced the first unrolled network structure by mapping the iterative algorithm ISTA-GS as an unrolled RNN, thereby improving the convergence rate by end-to-end training.
To make training procedure efficiently and tackle the interpretability issue of deep learning, we proposed two simplified unrolled network structures with less trainable parameters and proved the linear convergence rate of these methods.
Extensive simulations were
conducted to verify the effectiveness of the proposed unrolled networks in terms of the convergence rate, robustness and estimation accuracy for the JADCE problem.


\begin{appendices}

\section{Proof of theorem \ref{thm:main1}}\label{pr:th1}
\begin{proof}
        We prove the threshold value $\theta^k$ converges to zero firstly, and then we will show that the weights  $\{\bm W_{1}^{k}, \bm W_{2}^{k} \}_{k=0}^{\infty}$ in LISTA-GS
        have weight coupling property.

        (1) We verify that $ \theta^k  \to 0$ as $k \to \infty$ in \eqref{thm1:result1}.
        We define a subset of $\mathcal{X}(\beta,s,\sigma)$ for a given $0<\tilde{\beta} \leq \beta$ as          
        \begin{align*}
        \tilde{\mathcal{X}}(\beta, \tilde{\beta}, s, \sigma)   
        :=\big\{(\bm{\tilde X}^{\natural},  \bm{\tilde{Z}}) \in \mathcal{X}(\beta,s,\sigma) |~ | \supp(\psi(\bm{\tilde X}^{\natural})) | \leq s, \\
        \tilde{\beta} \leq  \big\| \bm{\tilde X}^{\natural}[i,:]   \big \|_2 \leq \beta,  
        \forall i \in \supp(\psi(\bm{\tilde X}^{\natural})) \big\}. 
        \end{align*}
                
        Clearly,  $\tilde{\mathcal{X}}(\beta, \tilde{\beta}, s, 0) \subset \mathcal{X}( \beta, s, 0)$.
        Since $\bm{\tilde X}^{k} \rightarrow \bm{\tilde X}^{\natural}$ is uniform for all $(\bm{\tilde X}^{\natural}, \bm{0}) \in \mathcal{X}(\beta, s, 0),$ we have $(\bm{\tilde X}^{\natural}, \bm{0}) \in \tilde{\mathcal{X}}(\beta, \tilde{\beta}, s, 0)$, where $\tilde{\beta} \leq \beta$. 
        Then, there exists $K_{1} \in \mathbb{N}$ for all $(\bm{\tilde X}^{\natural}, \bm{0}) \in \tilde{\mathcal{X}}(\beta, \beta /10, s, 0)$ such that
        if $k \geq K_{1}$, 
        $$\Big| \big\|  \bm{\tilde X}^{k}[i,:] \big\|_2 - \big\| \bm{\tilde X}^{\natural}[i,:] \big\|_{2} \Big| < \frac{\beta}{10} \quad \forall i\in [2N].$$
        Then we have 
        $
        \operatorname{sign}( \psi(\bm{\tilde X}^{k}))=\operatorname{sign}(\psi(\bm{\tilde X}^{\natural})), \forall k \geq K_{1}.
        $
        Recall that the recurrence relation
        $
        \bm{\tilde X}^{k+1} = \eta_{\theta^k}(\bm{W}_1^k\bm{\tilde{Y}}+\bm{W}_2^k \bm {\tilde X}^k)
        $
        and let $\mathcal{I} =\supp(\psi(\bm{\tilde{X}}^{\natural}))$. From the uniform convergence of $\bm{\tilde X}^{k}$, it follows that for any $k \geq K_{1} \text { and }(\bm{\tilde X}^{\natural}, \bm{0}) \in \tilde{\mathcal{X}}(\beta, \beta/10, s, 0)$. We have
        \begin{align}
        \bm{\tilde X}^{k+1}[\mathcal{I},:]
        = \bm{\tilde Q}^k - \theta^k
        {\diagg}
        (\psi (\bm{\tilde Q}^k))^{-1}    \bm{\tilde Q}^k,  \label{eq:iter1}     
        \end{align}
        where
        $ \bm{\tilde Q}^k = \bm{W}_{2}^{k}[\mathcal{I}, \mathcal{I}] \bm{\tilde X}^{k}[\mathcal{I},:]+\bm{W}_{1}^{k}[\mathcal{I}, :] \bm{\tilde Y}$.

        Also, the uniform convergence of $\bm{\tilde X}^{k}$ implies that for any $\varepsilon>0$ and $(\bm{\tilde X}^{\natural}, \bm{0}) \in \tilde{\mathcal{X}}(\beta, \beta/10, s, 0),$
        there exists $K_{2} \in \mathbb{N}$ such that if $k\geq K_2$, then $\| \bm{\tilde X}^{k}[\mathcal{I},:]-\bm{\tilde X}^{\natural}[\mathcal{I},:] \|_F < \varepsilon$. 
        Denote $\bm{\tilde{E}}^k = \bm{\tilde X}^{k}[\mathcal{I},:]-\bm{\tilde X}^{\natural}[\mathcal{I},:] \in \mathbb{R}^{|\mathcal{I}|\times M}$ for each $k \in \mathbb{N} $.
        That is, $\| \bm{\tilde{E}}^k \|_F < \varepsilon$ for all $k\geq K_2$. 
                        
        Since the noise is assumed to be zero, i.e., $\bm{\tilde{Y}} = \bm{\tilde{S}} \bm{\tilde{X}}^{\natural}$. 
        Then from
        \eqref{eq:iter1} we have
        \begin{align}
        &\| \theta^k 
        {\diagg}
        (\psi (\bm{\tilde Q}^k))^{-1}    \bm{\tilde Q}^k[:,{j}] \|_2^2 
        =\xi_j^k    + \notag \\
         &\big\| \big(\bm{I} - \bm{W}_{2}^{k}[\mathcal{I}, \mathcal{I}]-  \bm{W}_{1}^{k}[\mathcal{I}, :] \bm{\tilde S}[:,\mathcal{I}] \big) \bm{\tilde X}^{\natural}[\mathcal{I},{j}] \big\|_2^2, \label{eq:iter2}
        \end{align}
        where 
        $
        \xi_j^k 
        = \big\| -\big(\bm{I} - \bm{W}_{2}^{k}[\mathcal{I}, \mathcal{I}]-  \bm{W}_{1}^{k}[\mathcal{I}, :] \bm{\tilde S}[:,\mathcal{I}] \big) \bm{\tilde X}^{\natural}[\mathcal{I},{j}] +\bm{W}_{2}^{k}[\mathcal{I}, \mathcal{I}]\bm{\tilde{E}}^{k}[:,j]   -\bm{\tilde{E}}^{k+1}[:,j]  \big\|_2^2 
        - \big\| \big(\bm{I} - \bm{W}_{2}^{k}[\mathcal{I}, \mathcal{I}]-  \bm{W}_{1}^{k}[\mathcal{I}, :] \bm{\tilde S}[:,\mathcal{I}] \big) \bm{\tilde X}^{\natural}[\mathcal{I},{j}] \big\|_2^2 . \label{eq:xi_jk1}
        $       
                        
        We now show that $| \xi_j^k | $ is sufficiently small for $k\geq K_2$.
        Denote extra notations 
        \begin{align*}
        &\bm{p}_j^k=(\bm{I} - \bm{W}_{2}^{k}[\mathcal{I}, \mathcal{I}]-  \bm{W}_{1}^{k}[\mathcal{I}, :] \bm{\tilde S}[:,\mathcal{I}] \big) \bm{\tilde X}^{\natural}[\mathcal{I},{j}], \quad \\
        &\bm{q}_j^k=\bm{W}_{2}^{k}[\mathcal{I}, \mathcal{I}]\bm{\tilde{E}}^{k}[:,j]   -\bm{\tilde{E}}^{k+1}[:,j].
        \end{align*}
        The value $\xi_j^k$ can be simply rewritten as 
  $\| \bm{q}_j^k \|_2^2 -2\langle \bm{p}_j^k,\bm{q}_j^k \rangle, 
        $
        where $\langle \cdot, \cdot \rangle$ is the inner product in $\mathbb{R}^{|\mathcal{I}|}$.
        Then the Cauchy-Schwartz inequality implies $
                |\xi_j^k |  \leq   (2 \| \bm{p}_j^k \|_2 + 1) \| \bm{q}_j^k \|_2.$
        
        From the fact that  $\| \bm{A} \|_{2} \leq  \| \bm{A} \|_{F}$ for any matrix $\bm{A}$ and the triangle inequality it follows that if $k\geq K_2$, then
        \begin{align}
        \| \bm{q}_j^k \|_2
         &\leq \| \bm{W}_{2}^{k}[\mathcal{I}, \mathcal{I}] \|_{2} \| \bm{\tilde{E}}^{k}[:,j] \|_{2}  + \| \bm{\tilde{E}}^{k+1}[:,j]  \|_2 \notag \\
        &\leq (C_{W_2} +1)\varepsilon. \label{eq:q-bounded1}
        \end{align}
        Note that               
        \begin{align}
        \| \bm{p}_j^k \|_2^2
         \leq \big\| \bm{I} - \bm{W}_{2}^{k}[\mathcal{I}, \mathcal{I}]-  \bm{W}_{1}^{k}[\mathcal{I}, :] \bm{\tilde S}[:,\mathcal{I}]  \big\|_2^2    \big \| \bm{\tilde X}^{\natural}[\mathcal{I},{j}]  \big\|_2^2.  
        \end{align}
        Then it follows that
        \begin{align}
        \sum_{j=1}^M \| \bm{p}_j^k \|_2^2  
         &\leq \Big( 1 +  \big\| \bm{W}_{2}^{k} \big\|_F
        + \big\|  \bm{W}_{1}^{k}\big\|_F  \big\| \bm{\tilde S} \big\|_F \Big)^2   \big \| \bm{\tilde X}^{\natural} \big\|_F^2  \notag\\
        & \leq \Big( 1 + C_{W_2}
        +C_{W_1} \big\| \bm{\tilde S} \big\|_F \Big)^2   \big \| \bm{\tilde X}^{\natural} \big\|_F^2. 
        \end{align}
        By Cauchy-Schwarz inequality, it holds that 
        \begin{align}
        \sum_{j=1}^M \| \bm{p}_j^k \|_2
         &\leq  \sqrt{M} \sqrt{\sum_{j=1}^M \| \bm{p}_j^k \|_2^2} \notag \\
         &\leq  \sqrt{M} \Big( 1 + C_{W_2}
        +C_{W_1} \big\| \bm{\tilde S} \big\|_F \Big) \big \| \bm{\tilde X}^{\natural} \big\|_F.  \label{eq:p-bounde1}
        \end{align}
        Thus, by \eqref{eq:q-bounded1} and \eqref{eq:p-bounde1}
        \begin{align}
        \sum_{j=1}^M \big|\xi_j^k \big| 
         &\leq (2 \| \bm{p}_j^k \|_2 + 1) \| \bm{q}_j^k \|_2 \notag \\
         &\leq (C_{W_2}+1)  \varepsilon \sum_{j=1}^M (2 \| \bm{p}_j^k \|_2 +1 ) \notag \\
         &\leq (C_{W_2}+1)   ( 2C +M )\varepsilon, \label{eq:xi-bounded2}
        \end{align}
        where $C=\sqrt{M} \Big( 1 + C_{W_2}
        +C_{W_1} \big\| \bm{\tilde S} \big\|_F \Big) \big \| \bm{\tilde X}^{\natural} \big\|_F$.      
        Then, by \eqref{eq:iter2} we have
        \begin{align*}
        \big\| \big(\bm{I} - \bm{W}_{2}^{k}[\mathcal{I}, \mathcal{I}]-  \bm{W}_{1}^{k}[\mathcal{I}, :] \bm{\tilde S}[:,\mathcal{I}] \big) \bm{\tilde X}^{\natural}[\mathcal{I},:] \big\|_F^2 
       \\ =  -\sum_{j=1}^M \xi_j^k + \big|\theta^k \big|^2 \sum_{i=1}^{| \mathcal{I} |}  \Big\|
        \frac{1}{\| \bm{\tilde Q}^k[i,:] \|_2}
        \bm{\tilde Q}^k[i,:] \Big\|_2^2 
        \end{align*}    
        Therefore, by \eqref{de:mixnorm} we get
        \begin{align}
        &\big\| \big(\bm{I} - \bm{W}_{2}^{k}[\mathcal{I}, \mathcal{I}]-  \bm{W}_{1}^{k}[\mathcal{I}, :] \bm{\tilde S}[:,\mathcal{I}] \big) \bm{\tilde X}^{\natural}[\mathcal{I},:] \big\|_F^2 \notag \\
        &=  -\sum_{j=1}^M \xi_j^k +  | \theta^k |^2  | \mathcal{I} |. \label{eq:2Xpsi-xi1-xi1}
        \end{align}
        For any $(\bm{\tilde X}^{\natural}, \bm{0}) \in \tilde{\mathcal{X}}(\beta / 2, \beta /10, s, 0)$, it is true that $(2 \bm{\tilde X}^{\natural}, \bm{0}) \in \tilde{\mathcal{X}}(\beta, \beta/ 10, s, 0)$ holds. Thus, the above argument holds for all $(2\bm{\tilde X}^{\natural},\bm{0})$ if $(\bm{\tilde X}^{\natural}, \bm{0}) \in \tilde{\mathcal{X}}(\beta / 2, \beta / 10, s, 0) .$         
        Substituting $\bm{\tilde X}^{\natural}$ with $2\bm{\tilde{X}}^{\natural}$ in the previous equation, we have
        \begin{align}
        &4\big\| \big(\bm{I} - \bm{W}_{2}^{k}[\mathcal{I}, \mathcal{I}]-  \bm{W}_{1}^{k}[\mathcal{I}, :] \bm{\tilde S}[:,\mathcal{I}] \big) \bm{\tilde X}^{\natural}[\mathcal{I},:] \big\|_F^2 \notag \\
         &=  -\sum_{j=1}^M \hat{\xi}_j^k +  | \theta^k |^2  | \mathcal{I} |. \label{eq:2Xpsi-xi1-xi2}
        \end{align}
        From \eqref{eq:2Xpsi-xi1-xi1} and \eqref{eq:2Xpsi-xi1-xi2} one can get
        $
        3 | \theta^k|^2  | \mathcal{I} | = 4\sum_{j=1}^M {\xi}_j^k - \sum_{j=1}^M \hat{\xi}_j^k.
        $
        Then, taking the absolute value on the both sides,
        then by \eqref{eq:xi-bounded2}, if  $\forall k \geq \max\{K_{1}, K_{2}\}$,               
        \begin{eqnarray}\label{eq:theta-k-bounded1}
        |\theta^{k}|^2 
                \leq \frac{5(C_{W_2}+1)   ( 2C +M )}{3  | \mathcal{I} | } \varepsilon.
        \end{eqnarray}
                
        Moreover, as $\theta^{k} $ is MSTO parameter, $\theta^{k} \ge 0 $.
        Therefore, $\theta^{k} \rightarrow 0$ as $k \rightarrow \infty$.
                
        (2) We prove that $\bm I-\bm W_{2}^{k}-\bm W_{1}^{k} \bm{\tilde{S}} \rightarrow \bm{0}$ as $k \rightarrow \infty$.        
        LISTA-GS model \eqref{eq:ista} gives
        \begin{align*}
        &\bm{\tilde X}^{k+1}[\mathcal{I},:]
        = \eta_{\theta^k} \big( \bm{W}_1 [\mathcal{I},:] ^k \bm{\tilde{S}})  \bm{\tilde{X}}^{\natural}  + \bm{W}_2^k [\mathcal{I},:]  \bm{\tilde{X}}^k  \big) \\
         &\in   \bm{W}_1^k [\mathcal{I},:]  \bm{\tilde{S}}  \bm{\tilde{X}}^{\natural}  + \bm{W}_2^k [\mathcal{I},:]  \bm{\tilde{X}}^k -\theta^k \partial \ell_{2,1}(\bm{\tilde X}^{k+1}[\mathcal{I},:]) , 
        \end{align*}
        where $\partial \ell_{2,1}(\bm{\tilde{X}})$ is the sub-gradient of $\| \bm{\tilde{X}} \|_{2,1}$. It is a set defined
        for each row as follows:
        \begin{eqnarray}\label{de:mixnorm}
        \partial \ell_{2,1}(\bm{X})
        =
        \begin{bmatrix}
        \partial \| \bm{X}[1,:] \|_2 ,
        \partial \| \bm{X}[2,:] \|_2 ,
        \cdots ,
        \partial \| \bm{X}[2N,:] \|_2 
        \end{bmatrix}^T,
        \end{eqnarray}
        where
        \begin{align}
        \partial \| \bm{X}[n,:] \|_2
        =
        \begin{cases}
        \big\{ \frac{\bm{X}[n,:]}{\| \bm{X}[n,:] \|_2} \big \} \quad \text{if }\bm{X}[n,:] \neq \bm{0}, \\
        \{ \bm{h} \in \mathbb{R}^M |~ \| \bm{h} \|_2 \leq 1 \}  \quad \text{if }\bm{X}[n,:] = \bm{0}.
        \end{cases}
        \label{l2norm}
        \end{align}
        If $k \geq \max \{K_1,K_2 \}$ then from \eqref{eq:theta-k-bounded1} and     \eqref{eq:xi-bounded2}
        \begin{align*}
        &\Big\| \big(\bm{I} - \bm{W}_{2}^{k}[\mathcal{I}, \mathcal{I}]-  \bm{W}_{1}^{k}[\mathcal{I}, :] \bm{\tilde S}[:,\mathcal{I}] \big) \bm{\tilde X}^{\natural}[\mathcal{I},j]  \Big\|_2 \\
         &\leq  \Big( (C_{W_2}+1)   ( 2C +M )+\frac{5(C_{W_2}+1)   ( 2C +M )}{3  | \mathcal{I} | } \Big) \varepsilon.
        \end{align*}
                
        From the definition of operator norm we have that               
        \begin{align*}
        &\sigma_{\max} 
        \Big(\bm{I} - \bm{W}_{2}^{k}[\mathcal{I}, \mathcal{I}]-  \bm{W}_{1}^{k}[\mathcal{I}, :] \bm{\tilde S}[:,\mathcal{I}] \Big)  \\
        &\leq \frac{1}{\beta} \Big( (C_{W_2}+1)   ( 2C +M )+\frac{5(C_{W_2}+1)   ( 2C +M )}{3  | \mathcal{I} | } \Big) \varepsilon.
        \end{align*}
        Thus, $s\geq 2$, $\bm{I} - \bm{W}_{2}^{k}[\mathcal{I}, \mathcal{I}]-  \bm{W}_{1}^{k}[\mathcal{I}, :] \bm{\tilde S}[:,\mathcal{I}] \rightarrow \bm{0}$ uniformly for all $\mathcal{I}$ with $2\leq |\mathcal{I}| \leq s$. Therefore,
                $\bm{I} - \bm{W}_{2}^{k}-  \bm{W}_{1}^{k}\bm{\tilde S} \rightarrow \bm{0}$ as $k \rightarrow \infty$.      
\end{proof}
\section{Proof of Lemma 1}\label{pr:le1}
We will show that for the no-false-positives property holds for the LISTA-GSCP.
\begin{proof}
        Let $(\bm{\tilde X}^{\natural},\bm{\tilde{Z}}) \in \mathcal{X}(\beta,s,\sigma)$ and $\mathcal{I} = \supp( \psi( \bm{\tilde X}^{\natural}) )$. We prove that if \eqref{g1} holds, then
        $
         \bm{\tilde{X}}^{k}[i,:] = \bm{0}, \forall i \notin \mathcal{I}, \forall k.
        $       
        
        (i) When $k = 0$, it is satisfied since $\bm{\tilde{X}}^{0}=\bm{0}$.
        
        (ii) Suppose that 
        $\bm{\tilde{X}}^k[i,:]  =\bm{0}$ for all $i \notin \mathcal{I}$.
        Then by \eqref{eq:LISTA-CP} it follows that 
        \begin{align*}
        \bm{\tilde X}^{k+1}[i,:] 
        = \eta_{\theta^k} \Big( \bm{\tilde{X}}^k[i,:]  - (\bm{W}^k[:,i])^T \bm{\tilde{S}} (  \bm{\tilde{X}}^k -   \bm{\tilde{X}}^{\natural}) \\
        +  (\bm{W}^k[:,i])^T \bm{\tilde{Z}}   \Big)
        \end{align*}
        for all $i \notin \mathcal{I}$.
        Then by the triangle inequality it follows that 
        \begin{align*}
        &\big\| - (\bm{W}^k[:,i])^T \bm{\tilde{S}} (  \bm{\tilde{X}}^k -   \bm{\tilde{X}}^{\natural}) +  (\bm{W}^k[:,i])^T \bm{\tilde{Z}} )  \big\|_2 \\
        &{\leq} \sum_{j\in \mathcal{I}}     \big| \bm{W}^k[:,i])^T \bm{\tilde{S}}[:,j] \big| \big\|  \bm{\tilde{X}}^k[j,:] -   \bm{\tilde{X}}^{\natural}[j,:] \big \|_2  +  \big\| (\bm{W}^k[:,i])^T \bm{\tilde{Z}} )  \big\|_2 \\
        &{\leq}    \tilde{\mu} \big| \big\|  \bm{\tilde{X}}^k -   \bm{\tilde{X}}^{\natural} \big \|_{2,1} +  C_W \| \bm{\tilde{Z}}\|_F .
        \end{align*}
        
        Since $\theta^{k}=\tilde{\mu} \sup _{\bm{\tilde{X}}^{\natural}, \bm{\tilde{Z}} }\{\|\bm{\tilde{X}}^{k}-\bm{\tilde{X}}^{\natural}\|_{2,1}\}+C_{W} \sigma$ and $\bm{W}^{k} \in \mathcal{W}(\bm{\tilde{S}})$, so $\forall i \notin \mathcal{I}$ it holds that
        \begin{align*}
        \theta^{k} 
        \geq \big\| - (\bm{W}^k[:,i])^T \bm{\tilde{S}} (  \bm{\tilde{X}}^k -   \bm{\tilde{X}}^{\natural}) +  (\bm{W}^k[:,i])^T \bm{\tilde{Z}} )  \big\|_2, 
        \end{align*}
        which implies $\| \bm{\tilde{X}}^{k+1}[i,:]  \|_2=0, \forall i \notin \mathcal{I}$ by the definition of $\eta_{\theta^{k}} .$ Therefore, by induction, we have
        $\| \bm{\tilde{X}}^k[i,:]  \|_2=0, \forall i \notin \mathcal{I}, \forall k.$
\end{proof}     
\section{Proof of Theorem 3}\label{pr:th3}
\begin{proof}
        Similar to the proof of Theorem \ref{thm:convergence1}, we first show that there are no false positive in $ \bm{\tilde{X}}^k$. 
        Take arbitrary $(\bm{\tilde X}^{\natural}, \bm{\tilde Z}) \in \mathcal{X}(\beta,s,0)$ and let $\mathcal{I} = \supp( \psi( \bm{\tilde X}^{\natural}) )$. We prove the no-false-positive property by induction.
        
        (i) When $k = 0$, it is satisfied since $\bm{\tilde{X}}^{0}=\bm{0}$.
        
        (ii) Fix $k$ and suppose that 
        $ \bm{\tilde{X}}^k[i,:]  = \bm{0}$ for all $i \notin \mathcal{I}$.
    Then we have
        \begin{align*}
                \bm{\tilde{X}}^{k+1}[i,:] 
                &= \eta_{\theta^k} \big( \bm{\tilde{X}}^k[i,:]  + \gamma^k(\bm{W}[:,i])^T (\bm{\tilde{Y}}   -\bm{\tilde{S}}  \bm{\tilde{X}}^k)  \big)\\
                &= \eta_{\theta^k} \big( - \gamma^k(\bm{W}[:,i])^T\bm{\tilde{S}} (\bm{\tilde{X}}^k   - \bm{\tilde{X}}^{\natural})  \big)
        \end{align*}
        for all $i \notin \mathcal{I} $.
        As the thresholds are taken as  
        $\theta^{k}=  \tilde{\mu} \gamma^k \sup_{\bm{\tilde{X}}^{\natural}}   \| \bm{\tilde{X}}^{k}-\bm{\tilde{X}}^{\natural} \|_{2,1}  $ and $\bm{W} \in \mathcal{W}(\bm{\tilde{S}})$, it holds that
        \begin{align*}
        \theta^{k}
        \geq \tilde{\mu} \gamma^k \big\|  \bm{\tilde{X}}^{k}-\bm{\tilde{X}}^{\natural} \big\|_{2,1} 
        \geq \big\|  - \gamma^k(\bm{W}[:,i])^T\bm{\tilde{S}} (\bm{\tilde{X}}^k   - \bm{\tilde{X}}^{\natural}) \big\|_2, \\
         \forall i \notin \mathcal{I},
        \end{align*}
        which implies $\| \bm{\tilde{X}}^{k+1}[i,:]  \|_2=0, \forall i \notin \mathcal{I}$ by the definition. Thus,
        $ \bm{\tilde{X}}^k[i,:] =\bm{0},\forall i \notin \mathcal{I}, \forall k.$
        
        In the next step, we consider the components on $\mathcal{I}$. For all $i \in \mathcal{I}$, we have
        \begin{align}
        &\bm{\tilde X}^{k+1}[i,:] 
        \in \underbrace{\bm{\tilde{X}}^k[i,:]  - \gamma^k(\bm{W}[:,i])^T \bm{\tilde{S}}[:,\mathcal{I}] \big(  \bm{\tilde{X}}^k[\mathcal{I},:] -   \bm{\tilde{X}}^{\natural}[\mathcal{I},:] \big) }_{(T1)} \notag\\
        & - \theta^k \partial \Vert\bm{\tilde X}^{k+1}[i,:]\Vert_{2}, \label{eq:thm3-1}
        \end{align}
        where $\partial \Vert\bm{\tilde X}^{k+1}[i,:]\Vert_{2}$ is defined in \eqref{l2norm}. As we choose $\bm{W} \in  \mathcal{W}(\bm{\tilde{S}})$, so  $(\bm{W}[:,i])^T \bm{\tilde{S}}[:,i] = 1$, then $(T1)$ can be expressed as 
        \begin{align*}
   \bm{\tilde{X}}^{\natural}[i,:] \underbrace{ - \gamma^k \sum_{j\in \mathcal{I}, j \ne i} (\bm{W}[:,i])^T \bm{\tilde{S}}[:,j](\bm{\tilde{X}}^k[j,:]  - \bm{\tilde{X}}^{\natural}[j,:])} \\
   \underbrace{ + (1 - \gamma^k) (\bm{\tilde{X}}^k[i,:]  - \bm{\tilde{X}}^{\natural}[i,:])}_{(T2)}.
        \end{align*}
        Hence, \eqref{eq:thm3-1} can be rewritten as
        \begin{align*}
        \bm{\tilde X}^{k+1}[i,:] - \bm{\tilde{X}}^{\natural}[i,:] \in  (T2) - \theta^k \partial \ell_{2,1}(\bm{\tilde X}^{k+1}[i,:]) 
        \end{align*}
        Then one can get that for all $i \in \mathcal{I}$
        \begin{align*}
        \Vert \bm{\tilde X}^{k+1}[i,:] - \bm{\tilde{X}}^{\natural}[i,:]  \Vert_2  
        \leq \tilde{\mu} \gamma^k \sum_{j\in \mathcal{I}, j \ne i} \Vert(\bm{\tilde{X}}^k[j,:]  - \bm{\tilde{X}}^{\natural}[j,:]) \Vert_2 \\
         + \theta^k + |1 - \gamma^k|\| \bm{\tilde{X}}^k[i,:]  - \bm{\tilde{X}}^{\natural}[i,:]\|_2 .
        \end{align*}
        By no-false-positive property, we have
        \begin{align*}
        \Vert \bm{\tilde{X}}^{k+1} - \bm{\tilde{X}}^{\natural} \Vert_{2,1} 
        \leq \tilde{\mu}\gamma^k(|\mathcal{I}| - 1) \Vert \bm{\tilde{X}}^k - \bm{\tilde{X}}^{\natural} \Vert_{2,1}  + |\mathcal{I}|\theta^k \\+ |1 - \gamma^k|\| \bm{\tilde{X}}^k  - \bm{\tilde{X}}^{\natural}\|_{2,1}.
        \end{align*}
        
        Finally, we take supremum over $(\bm{\tilde{X}}^{\natural},\bm{0})\in\mathcal{X}(\beta, s, 0)$ with $|\mathcal{I}|\leq s$, we have
        \begin{align*}
        \sup _{\bm{\tilde X}^{\natural}}  \Vert \bm{\tilde{X}}^{k+1} - \bm{\tilde{X}}^{\natural} \Vert_{2,1}  \leq  \big( \tilde{\mu}\gamma^k(s - 1) + |1 - \gamma^k|\big)\\ \sup _{\bm{\tilde X}^{\natural}}  \Vert  \bm{\tilde{X}}^k - \bm{\tilde{X}}^{\natural} \Vert_{2,1}   + s\theta^k.
        \end{align*}
        Taking $c^{\tau} = -\log\big( \gamma^\tau(2 \tilde{\mu} s-\tilde{\mu}) + |1 - \gamma^\tau |\big)$, then from the fact that $\theta^{k}=\tilde{\mu} \gamma^k \sup _{\bm{\tilde{X}}^{\natural}} \|\bm{\tilde{X}}^{k}-\bm{\tilde{X}}^{\natural}\|_{2,1} $ in \eqref{eq:thm3-condition2} we obtain
        \begin{align*}
        \sup _{\bm{\tilde X}^{\natural}} \Vert \bm{\tilde{X}}^{k+1} - \bm{\tilde{X}}^{\natural} \Vert_{2,1}    
        &\leq \exp\Big(-\sum_{\tau=0}^{k}c^{\tau} \Big) \sup _{\bm{\tilde X}^{\natural}}  \Vert \bm{\tilde{X}}^0 - \bm{\tilde{X}}^{\natural} \Vert_{2,1}  \\
        &\leq s \beta \exp\Big(-\sum_{\tau=0}^{k}c^{\tau} \Big).
        \end{align*}
        Hence, we get the following upper bound with respect to the Frobenius norm:
        \begin{align}
        \sup _{\bm{\tilde X}^{\natural}} \Vert \bm{\tilde{X}}^{k+1} - \bm{\tilde{X}}^{\natural} \Vert_{F}  \notag 
        \leq \sup _{\bm{\tilde X}^{\natural}} \Vert \bm{\tilde{X}}^{k+1} - \bm{\tilde{X}}^{\natural} \Vert_{2,1} \notag \\
         \leq s \beta \exp\Big(-\sum_{\tau=0}^{k}c^{\tau} \Big).
        \end{align}
        Therefore, the error bound holds uniformly
        for all $(\bm{\tilde X}^{\natural}, \bm{0}) \in \mathcal{X}(\beta, s, 0)$.
        
        Lastly, we verify that $c^\tau >0$ for all $\tau$.
        The assumption $s<(1+1 / \tilde{\mu}) / 2$ gives $2\tilde{\mu}s-\tilde{\mu}<1$.
        If $0<\gamma^\tau \leq 1$, then $
 \gamma^\tau(2 \tilde{\mu} s-\tilde{\mu}) + |1 - \gamma^\tau | 
 = \gamma^\tau(2 \tilde{\mu} s-\tilde{\mu}-1)  + 1 <1.
        $
        If $1<\gamma^\tau < 2/(1+2\tilde{\mu} s - \tilde{\mu})$, then
        $
 \gamma^\tau(2 \tilde{\mu} s-\tilde{\mu}) + |1 - \gamma^\tau | 
 = \gamma^\tau(2 \tilde{\mu} s-\tilde{\mu}+1)  - 1 <1.
        $
        Thus $c^\tau >0$ for all $\tau$.
        
\end{proof}                              

\end{appendices}

\bibliographystyle{ieeetr}
\bibliography{refs}

\end{document}